\newcommand{\Ex}{\mathop{{\bf E}\/}}
\renewcommand{\Pr}{\operatorname{{\bf Pr}}}
\newcommand{\poly}{\mathrm{poly}}
\newcommand{\R}{\mathbbm R}
\newcommand{\eps}{\varepsilon}
\renewcommand{\epsilon}{\eps}
\newcommand{\wh}[1]{\widehat{#1}}
\renewcommand{\hat}{\wh}
\newtheorem*{rep@theorem}{\rep@title}
\newcommand{\newreptheorem}[2]{
\newenvironment{rep#1}[1]{
 \def\rep@title{#2 \ref{##1}}
 \begin{rep@theorem}\itshape}
 {\end{rep@theorem}}}
\theoremstyle{plain}
\newcommand{\ignore}[1]{}
\def\colorful{1}
 \newtheorem{theorem}{Theorem}
 \newtheorem{lemma}{Lemma}[section]
\newtheorem{claim}[lemma]{Claim}
\newtheorem{definition}{Definition}
\theoremstyle{definition}
\theoremstyle{plain}
\newtheorem*{theorem*}{Theorem}
\newtheorem*{noclaim*}{Claim}
\newcommand{\OPT}{\textsc{OPT}}
\DeclareMathOperator*{\argmin}{argmin}
\title{Regularized Weighted Low Rank Approximation}
\author{Frank Ban \\ UC Berkeley / Google \\
  \texttt{fban@berkeley.edu} \and
  David Woodruff \\ Carnegie Mellon University \\
  \texttt{dwoodruf@cs.cmu.edu} \and
  Qiuyi (Richard) Zhang \\ UC Berkeley / Google \\
  \texttt{qiuyi@berkeley.edu}
}
\begin{document}

\maketitle

\begin{abstract}
The classical low rank approximation problem is to find a rank $k$ matrix $UV$ (where $U$ has $k$ columns and $V$ has $k$ rows) that minimizes the Frobenius norm of $A - UV$. Although this problem can be solved efficiently, we study an NP-hard variant of this problem that involves weights and regularization. A previous paper of [Razenshteyn et al. '16] derived a polynomial time algorithm for weighted low rank approximation with constant rank. We derive provably sharper guarantees for the regularized version by obtaining parameterized complexity bounds in terms of the statistical dimension rather than the rank, allowing for a rank-independent runtime that can be significantly faster. Our improvement comes from applying sharper matrix concentration bounds, using a novel conditioning technique, and proving structural theorems for regularized low rank problems.
\end{abstract}
\section{Introduction}
In the weighted low rank approximation problem, one is given a matrix $M \in \R^{n \times d}$, a weight matrix $W \in \R^{n \times d}$, and an integer parameter $k$, and would like to find factors $U \in \R^{n \times k}$ and $V \in \R^{k \times d}$ so as to minimize
$$\|W \circ (M - U \cdot V)\|_F^2 = \sum_{i=1}^n \sum_{j=1}^d W_{i,j}^2 (M_{i,j} - \langle U_{i, *}, V_{*, j} \rangle )^2,$$
where $U_{i, *}$ denotes the $i$-th row of $U$ and $V_{*,j}$ denotes the $j$-th column of $V$. W.l.o.g., we assume $n \geq d$. This is a weighted version of the classical low rank approximation problem, which is a special case when $W_{i,j} = 1$ for all $i$ and $j$. One often considers the approximate version of this problem, for which one is given an approximation parameter $\eps \in (0,1)$ and would like to find $U \in \R^{n \times k}$ and $V \in \R^{k \times d}$ so that
\begin{equation}\label{eqn:guarantee}
 \|W \circ (M- U \cdot V)\|_F^2 \leq \displaystyle (1+\eps) \min_{U' \in \R^{n \times k}, V' \in \R^{k \times d}} \|W \circ (M - U' \cdot V')\|_F^2.
\end{equation}

Weighted low rank approximation extends the classical low rank approximation problem in many ways. While in principal component analysis, one typically first subtracts off the mean to make the matrix $M$ have mean $0$, this does not fix the problem of differing variances. Indeed, imagine one of the columns of $M$ has much larger variance than the others. Then in classical low rank approximation with $k = 1$, it could suffice to simply fit this single high variance column and ignore the remaining entries of $M$. Weighted low rank approximation, on the other hand, can correct for this by re-weighting each of the entries of $M$ to give them similar variance; this allows for the low rank approximation $U \cdot V$ to capture more of the remaining data. This technique is often used in gene expression analysis and co-occurrence matrices; we refer the reader to \cite{sj03} and the Wikipedia entry on weighted low rank approximation\footnote{\url{https://en.wikipedia.org/wiki/Low-rank_approximation##Weighted_low-rank_approximation_problems}}. The well-studied problem of {\it matrix completion} is also a special case of weighted low rank approximation, where the entries $W_{i,j}$ are binary, and has numerous applications in recommendation systems and other settings with missing data.

Unlike its classical variant, weighted low rank approximation is known to be NP-hard \cite{GG11}. Classical low rank approximation can be solved quickly via the singular value decomposition, which is often sped up with sketching techniques \cite{w14,pilanci2015randomized,tropp2017practical}. However, in the weighted setting, under a standard complexity-theoretic assumption known as the Random Exponential Time Hypothesis (see, e.g., Assumption 1.3 in \cite{rsw16} for a discussion), there is a fixed constant $\eps_0 \in (0,1)$ for which any algorithm achieving (\ref{eqn:guarantee}) with constant probability and for $\eps = \eps_0$, and even for $k = 1$, requires $2^{\Omega(r)}$ time, where $r$ is the number of distinct columns of the weight matrix $W$. Furthermore, as shown in Theorem 1.4 of \cite{rsw16}, this holds even if $W$ has both at most $r$ distinct rows and $r$ distinct columns.

Despite the above hardness, in a number of applications the parameter $r$ may be small. Indeed, in a matrix in which the rows correspond to users and the columns correspond to ratings of a movie, such as in the Netflix matrix, one may have a small number of categories of movies. In this case, one may want to use the same column in $W$ for each movie in the same category. It may thus make sense to renormalize user ratings based on the category of movies being watched. Note that any number of distinct rows of $W$ is possible here, as different users may have completely different ratings, but there is just one distinct column of $W$ per category of movie. In some settings one may simultaneously have a small number of distinct rows and a small number of distinct columns. This may occur if say, the users are also categorized into a small number of groups. For example, the users may be grouped by age and one may want to weight ratings of different categories of movies based on age. That is, maybe cartoon ratings of younger users should be given higher weight, while historical films rated by older users should be given higher weight.

Motivated by such applications when $r$ is small, \cite{rsw16} propose several {\it parameterized complexity algorithms}. They show that in the case that $W$ has at most $r$ distinct rows and $r$ distinct columns, there is an algorithm solving (\ref{eqn:guarantee}) in $2^{O(k^2 r/ \eps)} \poly(n)$ time. If $W$ has at most $r$ distinct columns but any number of distinct rows, there is an algorithm achieving (\ref{eqn:guarantee}) in $2^{O(k^2 r^2 /\eps)} \poly(n)$ time. Note that these bounds imply that for constant $k$ and $\eps$, even if $r$ is as large as $\Theta(\log n)$ in the first case, and $\Theta(\sqrt{\log n})$ in the second case, the corresponding algorithm is polynomial time.

In \cite{rsw16}, the authors also consider the case when the rank of the weight matrix $W$ is at most $r$, which includes the $r$ distinct rows and columns, as well as the $r$ distinct column settings above, as special cases. In this case the authors achieve an $n^{O(k^2r/\eps)}$ time algorithm. Note that this is only polynomial time if $k, r,$ and $\eps$ are each fixed constants, and unlike the algorithms for the other two settings, this algorithm is not fixed parameter tractable, meaning its running time cannot be written as $f(k, r, 1/\eps) \cdot \poly(nd)$, where $f$ is a function that is independent of $n$ and $d$.

There are also other algorithms for weighted low rank approximation, though they do not have provable guarantees, or require strong assumptions on the input. There are gradient-based approaches of Shpak \cite{s90} and alternating minimization approaches of Lu et al. \cite{lu97,l03}, which were refined and used in practice by Srebro and Jaakkola \cite{sj03}. However, neither of these has provable gurantees. While there is some work that has provable guarantees, it makes incoherence assumptions on the low rank factors of $M$, as well as assumptions that the weight matrix $W$ is spectrally close to the all ones matrix \cite{li2016recovery} and that there are no $0$ weights.

\subsection{Our Contributions}
The only algorithms with provable guarantees that do not make assumptions on the inputs are slow, and inherently so given the above hardness results. Motivated by this and the widespread use of regularization in machine learning, we propose to look at {\it provable guarantees for regularized weighted low rank approximation}. In one version of this problem, where the parameter $r$ corresponds to the rank of the weight matrix $W$, we are given a matrix $M \in \R^{n \times d}$, a weight matrix $W \in \R^{n \times d}$ with rank $r$, and a target integer $k > 0$, and we consider the problem \[
	\min_{U \in \R^{n \times k}, \\ V \in \R^{k \times d}} \| W \circ (UV - M) \|_F^2 + \lambda \| U \|_F^2 + \lambda \| V \|_F^2
	\]  Let $U^*, V^*$ minimize $\| W \circ (UV - M) \|_F^2 + \lambda \| U \|_F^2 + \lambda \| V \|_F^2$ and $\OPT$ be the minimum value.

Regularization is a common technique to avoid overfitting and to solve an ill-posed problem. It has been applied in the context of weighted low rank approximation \cite{das}, though so far the only such results known for weighted low rank approximation with regularization are heuristic. In this paper we give the first provable bounds, without any assumptions on the input, on regularized weighted low rank approximation.

Importantly, we show that regularization improves our running times for weighted low rank approximation, as specified below. Intuitively, the complexity of regularized problems depends on the ``statistical dimension" or ``effective dimension" of the underlying problem, which is often significantly smaller than the number of parameters in the regularized setting.

Let $U^*$ and $V^*$ denote the optimal low-rank matrix approximation factors, $D_{W_{i,:}}$ denote the diagonal matrix with the $i$-th row of $W$ along the diagonal, and $D_{W_{:,j}}$ denote the diagonal matrix with the $j$-th column of $W$ along the diagonal.

{\bf Improving the Exponent:} We first show how to improve the $n^{O(k^2 r/\epsilon)}$ time algorithm of \cite{rsw16} to a running time of $n^{O((s + \log(1/\epsilon)) rk/\epsilon)}$. Here $s$ is defined to be the maximum statistical dimension of $V^*D_{W_{i,:}}$ and $D_{W_{:, j}} U^*$, over all $i = 1, \ldots, n$, and $j = 1, \ldots, d$, where the statistical dimension of a matrix $M$ is:

\begin{definition}
Let ${\texttt{sd}}_\lambda (M) = \sum_i 1 / (1 + \lambda / \sigma_i^2)$ denote the statistical dimension of $M$ with regularizing weight $\lambda$ (here $\sigma_i$ are the singular values of $M$).
\end{definition}

Note that this maximum value $s$ is always at most $k$ and for any $s \geq \log(1/\epsilon)$, our bound directly improves upon the previous time bound. Our improvement requires us to sketch matrices with k columns down to $s/\eps$ rows where $s/\eps$ is potentially smaller than $k$. This is particularly interesting since most previous provable sketching results for low rank approximation cannot have sketch sizes that are smaller than the rank, as following the standard analysis would lead to solving a regression problem on a matrix with fewer rows than columns.

Thus, we introduce the notion of an upper and lower distortion factor ($K_S$ and $\kappa_S$ below) and show that the lower distortion factor will satisfy tail bounds only on a smaller-rank subspace of size $s/\eps$, which can be smaller than k. Directly following the analysis of \cite{rsw16} will cause the lower distortion factor to be infinite. The upper distortion factor also satisfies tail bounds via a more powerful matrix concentration result not used previously. Furthermore, we apply a novel conditioning technique that conditions on the product of the upper and lower distortion factors on separate subspaces, whereas previous work only conditions on the condition number of a specific subspace.

We next considerably strengthen the above result by showing an $n^{O(r^2(s+\log(1/\epsilon))^2 /\epsilon^2))}$ time algorithm. This shows that the rank $k$ need not be in the exponent of the algorithm at all! We do this via a novel projection argument in the objective (“sketching on the right”), which was not done in \cite{rsw16} and also improves a previous result for the classical setting in \cite{ACW17}. To gain some perspective on this result, suppose $\epsilon$ is a large constant, close to $1$, and $r$ is a small constant. Then our algorithm runs in $n^{O(s^2)}$ time as opposed to the algorithm of \cite{rsw16} which runs in $n^{O(k^2)}$ time. We stress in a number of applications, the effective dimension $s$ may be a very small constant, close to $1$, even though the rank parameter $k$ can be considerably larger. This occurs, for example, if there is a single dominant singular value, or if the singular values are geometrically decaying. Concretely, it is realistic that $k$ could be $\Theta(\log n)$, while $s = \Theta(1)$, in which case our algorithm is the first polynomial time algorithm for this setting.

{\bf Improving the Base:} We can further optimize by removing our dependence on $n$ in the base. The {\it non-negative rank} of a $n \times d$ matrix $A$ is defined to be the least $r$ such that there exist factors $U \in \mathbb{R}^{n \times r}$ and $V \in \mathbb{R}^{r \times d}$ where $A = U \cdot V$ and both $U$ and $V$ have non-negative entries.  By applying a novel rounding procedure, if in addition the {non-negative rank} of $W$ is at most $r'$, then we can obtain a fixed-parameter tractable algorithm running in time $2^{r' r^2 (s+\log(1/\epsilon))^2/\epsilon^2)} \poly(n)$. Note that $r \leq r'$, where $r$ is the rank of $W$. Note also that if $W$ has at most $r$ distinct rows or columns, then its non-negative rank is also at most $r$ since we can replace the entries of $W$ with their absolute values without changing the objective function, while still preserving the property of at most $r$ distinct rows and/or columns. Consequently, we significantly improve the algorithms for a small number of distinct rows and/or columns of \cite{rsw16}, as our exponent is {\it independent} of $k$. Thus, even if $k = \Theta(n)$ but the statistical dimension $s = O(\sqrt{\log n})$, for constant $r'$ and $\epsilon$ our algorithm is polynomial time, while the best previous algorithm would be exponential time.

We also give ways, other than non-negative rank, for improving the running time. Supposing that the rank of $W$ is $r$ again, we apply iterative techniques in linear system solving like Richardson's Iteration and preconditioning to further improve the running time. We are able to show that instead of an $n^{\poly(rs/\epsilon)}$ time algorithm, we are able to obtain algorithms that have running time roughly $(\sigma^2/\lambda)^{\poly(rs/\epsilon)} \poly(n)$ or $(u_W/l_W)^{\poly(rs/\epsilon)} \poly(n)$, where $\sigma^2$ is defined to be the maximum singular value of $V^*D_{W_{i,:}}$ and $D_{W_{:, j}} U^*$, over all $i = 1, \ldots, n$, and $j = 1, \ldots, d$, while $u_W$ is defined to be the maximum absolute value of an entry of $W$ and $l_W$ the minimum absolute value of an entry. In a number of settings one may have $\sigma^2/\lambda = O(1)$ or $u_W/l_W = O(1)$ in which case we again obtain fixed parameter tractable algorithms.

{\bf Empirical Evaluation:} Finally, we give an empirical evaluation of our results. While the main goal of our work is to obtain the first algorithms with provable guarantees for regularized weighted low rank approximation, we can also use them to guide heuristics in practice. In particular, alternating minimization is a common heuristic for weighted low rank approximation. We consider a sketched version of alternating minimization to speed up each iteration. We show that in the regularized case, the dimension of the sketch can be significantly smaller if the statistical dimension is small, which is consistent with our theoretical results.
\section{Preliminaries}

We let $\| \cdot \|_F$ denote the Frobenius norm of a matrix and let $\circ$ be the elementwise matrix multiplication operator. We denote $x \in [a,b]\,y$ if $ay \leq x \leq by$. For a matrix $M$, let $M_{i,;}$ denote its $i$th row and let $M_{;,j}$ denote its $j$th column. For $v \in \R^n$, let $D_v$ denote the $n \times n$ diagonal matrix with its $i$-th diagonal entry equal to $v_i$. For a matrix $M$ with non-negative $M_{ij}$, let ${\texttt{nnr}}(M)$ denote the non-negative rank of $M$. Let ${\texttt{sr}} (M) = {\| M \|_F^2} / {\| M \|^2}$ denote the stable rank of $M$. Let $\mathcal{D}$ denote a distribution over $r \times n$ matrices; in our setting, there are matrices with entries that are Gaussian random variables with mean $0$ and variance $1/r$ (or $r \times n$ CountSketch matrices \cite{w14}).

\begin{definition}
For $S$ sampled from a distribution of matrices $\mathcal{D}$ and a matrix $M$ with $n$ rows, let $c_S(M) \geq 1$ denote the smallest (possibly infinite) number such that $\| SMv \|^2 \in [ {c_S (M)}^{-1}, c_S (M)] \| Mv \|^2 $ for all $v$.
\end{definition}

\begin{definition}
For $S$ sampled from a distribution of matrices $\mathcal{D}$ and a matrix $M$, let $K_S (M) \geq 1$ denote the smallest number such that $\| SMv \|^2 \leq K_S (M) \| Mv \|^2 $ for all $v$.
\end{definition}

\begin{definition}
For $S$ sampled from a distribution of matrices $\mathcal{D}$ and a matrix $M$, let $\kappa_S (M) \leq 1$ denote the largest number such that $\| SMv \|^2 \geq \kappa_S (M) \| Mv \|^2 $ for all $v$.
\end{definition}

Note that by definition, $c_s(M)$ equals the max of $K_S(M)$ and $\frac{1}{\kappa_S(M)}$. We define the condition number of a matrix $A$ to be $c_A = K_A(I)/\kappa_A(I)$.

\subsection{Previous Techniques}

Building upon the initial framework established in \cite{rsw16}, we apply a polynomial system solver to solve weighted regularized LRA to high accuracy. By applying standard sketching guarantees, $v$ can be made a polynomial function of $k, 1/\epsilon, r$ that is independent of $n$.

\begin{theorem}[\cite{renegar1992computational}\cite{renegar1992computational2}\cite{basu1996combinatorial}]
\label{thm:polysolver}
Given a real polynomial system $P(x_1,x_2,...,x_v)$ having $v$ variables and $m$ polynomial constraints $f_i(x_1,...,x_v) \Delta_i 0$, where $\Delta_i \in \{\geq , =, \leq \}$, $d$ is the maximum degree of all polynomials, and $H$ is the maximum bitsize of the coefficients of the polynomials, one can determine if there exists a solution to $P$ in $ (md)^{O(v)}poly(H)$ time.
\end{theorem}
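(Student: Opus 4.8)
The plan is to treat Theorem~\ref{thm:polysolver} as the decision problem for the existential theory of the reals and attack it with the \emph{critical points method}, the standard route to single-exponential-in-$v$ complexity (naive cylindrical algebraic decomposition would only yield a doubly-exponential bound). The first step is a reduction from feasibility to the computation of \emph{sample points}: it suffices to produce a finite set of points in $\R^v$ that meets every connected component of the realization of each sign condition compatible with the constraints; the system $P$ is feasible if and only if at least one sampled point satisfies all of the $f_i \,\Delta_i\, 0$ simultaneously. Thus the core task reduces to sampling a point from every connected component of a real algebraic set.

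Before critical points can be used, I would regularize the geometry. The inequality constraints are handled by introducing infinitesimals: replacing $f_i \ge 0$ and $f_i \le 0$ by perturbed equalities $f_i = \pm\eps$ (or passing to the boundary variety), and adjoining a ball constraint $\sum_j x_j^2 \le R$ with $R$ an infinitesimally large parameter $1/\eps$ to guarantee compactness. A generic infinitesimal perturbation of the defining polynomials makes the resulting variety smooth and bounded, so that a Morse-type function attains its minimum at an interior critical point on each component. Carrying the single symbolic infinitesimal $\eps$ (rather than branching combinatorially) is what lets the bit-complexity stay $\poly(H)$.

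Next I would set up and solve the critical-point system. For a generic linear form $\ell$ (equivalently, squared distance to a generic point), the critical points of $\ell$ restricted to the smooth perturbed variety $\{g_1=\cdots=g_m=0\}$ are the common zeros of the $g_i$ together with the Lagrange/Jacobian-minor conditions asserting that $\nabla \ell$ lies in the span of the $\nabla g_i$. By genericity this augmented system is zero-dimensional, and a B\'ezout estimate bounds the number of complex solutions, hence the degree of a rational univariate representation of them, by $d^{O(v)}$. Computing this representation (a single univariate polynomial together with coordinate parametrizations) takes $(md)^{O(v)}$ arithmetic operations on integers of bit-size $(md)^{O(v)}\poly(H)$; real-root isolation via Sturm sequences or Thom encodings then extracts the real sample points, and substituting them back tests the sign conditions. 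Summing over the at most $(md)^{O(v)}$ relevant sign configurations gives the claimed $(md)^{O(v)}\poly(H)$ running time.

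The hard part will be the correctness and complexity bookkeeping of the regularization rather than the final algebra. Specifically, one must prove that the critical-point sampling actually meets \emph{every} connected component --- this rests on a Sard/Morse-theoretic genericity argument for the linear form $\ell$ together with a limiting argument sending $\eps \to 0$ to transfer sample points from the perturbed smooth variety back to the original (possibly singular, unbounded) set --- and one must verify that neither the degree $d^{O(v)}$ nor the coefficient bit-size $(md)^{O(v)}\poly(H)$ degrades when the infinitesimals are eliminated. These are exactly the technical cruxes carried out in \cite{renegar1992computational,basu1996combinatorial}, which I would invoke rather than re-derive.
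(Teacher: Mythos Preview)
The paper does not prove Theorem~\ref{thm:polysolver} at all: it is stated purely as a black-box citation of \cite{renegar1992computational,renegar1992computational2,basu1996combinatorial} and is used only as a tool in the algorithms of Sections~4 and~5. So there is nothing in the paper to compare your proposal against.

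That said, your sketch is a faithful high-level summary of what those cited references actually do. The reduction to sampling one point per connected component of each realizable sign condition, the infinitesimal deformation to a smooth bounded variety, the critical-point system for a generic linear form, the $d^{O(v)}$ B\'ezout bound and rational univariate representation, and the $\eps\to 0$ limiting argument are precisely the ingredients of the Basu--Pollack--Roy version of the algorithm. If anything, you are doing strictly more than the paper asks: for the purposes of this paper it would have been entirely appropriate to write ``cited result, no proof given here.''
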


Intuitively, the addition of regularization requires us to only preserve directions with high spectral weight in order to preserve our low rank approximation well enough. This dimension of the subspace spanned by these important directions is exactly the statistical dimension of the problem, allowing us to sketch to a size less than $k$ that could provably preserve our low rank approximation well enough. In line with this intuition, we use an important lemma from \cite{cohen_et_al:LIPIcs:2016:6278}
\begin{lemma} \label{lem:AMM}
Let $A, B$ be matrices with $n$ rows and let $S$, sampled from $\mathcal{D}$, have $\ell = \Omega (\frac{1}{\gamma^2}(K + \log (1 / \eps)))$ rows and $n$ columns. Then $$ \Pr \bigg[ \| A^T S^T S B - A^T B \| > \gamma \cdot \sqrt{\left( \| A \|^2 + \| A \|_F^2/K \right) \left( \| B \|^2 + \| B \|_F^2/K \right)}\bigg] < \eps $$
In particular, if we choose $K > \Omega({\texttt{sr}}(A) + {\texttt{sr}}(B))$, then we have for some small constant $\gamma'$, $$ \Pr \left[ \| A^T S^T S B - A^T B \| > \gamma' \| A \| \| B \|  \right] < \eps $$
\end{lemma}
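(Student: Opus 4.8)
The plan is to reduce the bilinear spectral-norm statement to a single symmetric concentration bound, establish that bound with a stable-rank-aware matrix concentration argument, and then specialize the free parameter $K$.

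\textbf{Reduction to the symmetric case.} First I would form the column concatenation $C = [A \ B]$. The off-diagonal block of $C^T S^T S C - C^T C$ is exactly $A^T S^T S B - A^T B$, and the spectral norm of a block never exceeds that of the full matrix, so it suffices to bound $\|C^T S^T S C - C^T C\|$. Using $CC^T = AA^T + BB^T$ gives $\|C\|^2 \le \|A\|^2 + \|B\|^2$ and $\|C\|_F^2 = \|A\|_F^2 + \|B\|_F^2$, so a symmetric bound of the form $\gamma(\|C\|^2 + \|C\|_F^2/K)$ yields $\gamma\big(\|A\|^2 + \|B\|^2 + (\|A\|_F^2 + \|B\|_F^2)/K\big)$. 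To convert this sum into the product in the statement, I would use the invariance of $A^T S^T S B - A^T B$ under the rescaling $A \mapsto A/\alpha$, $B \mapsto \alpha B$: applying the symmetric bound to the rescaled pair gives $\gamma\big((\|A\|^2 + \|A\|_F^2/K)/\alpha^2 + \alpha^2(\|B\|^2 + \|B\|_F^2/K)\big)$, and optimizing over $\alpha$ by AM–GM produces $2\gamma\sqrt{(\|A\|^2 + \|A\|_F^2/K)(\|B\|^2 + \|B\|_F^2/K)}$, with the factor $2$ absorbed into the constant hidden in $\gamma$.

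\textbf{The symmetric concentration bound.} The heart of the argument is to show that for $\ell = \Omega(\gamma^{-2}(K + \log(1/\epsilon)))$,
\[
\Pr\!\left[\,\|C^T S^T S C - C^T C\| > \gamma\big(\|C\|^2 + \|C\|_F^2/K\big)\,\right] < \epsilon .
\]
Writing $S^T S$ as a sum over the independent rows $s_i$ of $S$ (normalized so that $\E[S^T S] = I$), the matrix $C^T S^T S C - C^T C = \sum_{i=1}^{\ell}\big(C^T s_i s_i^T C - \tfrac{1}{\ell} C^T C\big)$ is a sum of independent, centered, symmetric matrices. I would bound its spectral $p$-th moment $\E\|C^T S^T S C - C^T C\|^p$ for even $p \asymp \log(1/\epsilon)$ via the trace-power inequality $\|M\|^p \le \operatorname{tr}(M^p)$, expand the trace, and control the resulting Gaussian/CountSketch chaos using the JL-moment (Hanson–Wright) estimates that $\mathcal{D}$ satisfies. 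The computation produces the two-term bound through the matrix variance proxy $\nu \approx \tfrac{1}{\ell}\|C\|^2(\|C\|^2 + \|C\|_F^2)$: with $\ell \gtrsim \gamma^{-2} K$, the scale $\sqrt{\nu}$ is, after an AM–GM split, at most $\gamma(\|C\|^2 + \|C\|_F^2/K)$, while the admissible failure probability forces $p \gtrsim \log(1/\epsilon)$, hence the additive $\log(1/\epsilon)$ in $\ell$; Markov's inequality on the $p$-th moment then gives the displayed tail. I expect this trace-moment estimate — extracting the sharp stable-rank correction $\|C\|_F^2/K$ in spectral norm, rather than the coarser ambient-dimension factor a naive $\epsilon$-net would incur — to be the main obstacle, and it is exactly the ``more powerful matrix concentration result'' invoked in the introduction; for a fully rigorous treatment I would cite the corresponding theorem of \cite{cohen_et_al:LIPIcs:2016:6278}.

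\textbf{The specialization.} For the ``in particular'' claim, choosing $K = \Theta({\texttt{sr}}(A) + {\texttt{sr}}(B))$ makes $\|A\|_F^2/K = {\texttt{sr}}(A)\|A\|^2/K \le \|A\|^2$ and likewise for $B$, so that $\|A\|^2 + \|A\|_F^2/K \le 2\|A\|^2$. Substituting into the product bound gives $\sqrt{(\|A\|^2 + \|A\|_F^2/K)(\|B\|^2 + \|B\|_F^2/K)} \le 2\|A\|\|B\|$, and the row requirement becomes $\ell = \Omega(\gamma^{-2}({\texttt{sr}}(A) + {\texttt{sr}}(B) + \log(1/\epsilon)))$. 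Renaming the small constant $2\gamma$ as $\gamma'$ yields $\Pr[\|A^T S^T S B - A^T B\| > \gamma'\|A\|\|B\|] < \epsilon$, as claimed.
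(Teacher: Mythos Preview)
The paper does not prove this lemma: it is stated in the ``Previous Techniques'' subsection as an imported result from \cite{cohen_et_al:LIPIcs:2016:6278}, with no proof or proof sketch. So there is no paper proof to compare against.

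Your proposal is nonetheless correct and is essentially the standard way one derives the bilinear form from the symmetric one. The block-concatenation $C=[A\ B]$ to extract $A^TS^TSB-A^TB$ as an off-diagonal block, the rescaling $A\mapsto A/\alpha$, $B\mapsto\alpha B$ with AM--GM to turn the sum $\|A\|^2+\|B\|^2+(\|A\|_F^2+\|B\|_F^2)/K$ into the product, and the substitution $K=\Theta({\texttt{sr}}(A)+{\texttt{sr}}(B))$ for the ``in particular'' clause are all valid and cleanly executed. You also correctly identify the symmetric tail bound $\Pr[\|C^TS^TSC-C^TC\|>\gamma(\|C\|^2+\|C\|_F^2/K)]<\eps$ as the nontrivial core and point to the same reference the paper uses for it; that is exactly right, and the trace-moment route you sketch is how that reference proceeds. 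In short, you have supplied more than the paper does here.
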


\section{Multiple Regression Sketches}
\label{subsec:sketchleft}
 In this section, we prove our main structural theorem which allows us to sketch regression matrices to the size of the statistical dimension of the matrices while maintaining provable guarantees. Specifically, to approximately solve a sum of regression problems, we are able to reduce the dimension of the problem to the maximum statistical dimension of the regression matrices.

\begin{theorem}
\label{thm:sketch_left}
Let $M^{(1)}, \ldots, M^{(d)} \in \R^{n \times k}$ and $b^{(1)}, \ldots, b^{(d)} \in \R^{n}$ be column vectors. Let $S \in \R^{\ell \times n}$ be sampled from $\mathcal{D}$ with $\ell = \Theta(\frac{1}{\eps} (s + \log(1/ \eps)))$ and $s = \max_i \{ {\texttt{sd}}_{\lambda}(M^{(i)}) \}$.

Define $x^{(i)} = \argmin_x \| {M}^{(i)} x - {b}^{(i)} \|^2 + \lambda \| x \|^2$ and $y^{(i)} = \argmin_y \| {S} ({M}^{(i)} y - {b}^{(i)}) \|^2 + \lambda \| y \|^2.$
Then, with constant probability, $$ \sum_{i = 1}^d \| M^{(i)}y^{(i)} - b^{(i)} \|^2 + \lambda \|y^{(i)} \|^2 \leq (1 + \eps) \cdot\left( \sum_{i = 1}^d \|M^{(i)} x^{(i)} - b^{(i)} \|^2 + \lambda \|x^{(i)} \|^2 \right) $$
\end{theorem}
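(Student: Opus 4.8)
The plan is to reduce the claim to a per‑problem second‑order (error) estimate for each ridge regression and then control the resulting quantity with the approximate matrix multiplication bound of Lemma~\ref{lem:AMM}, with the statistical dimension playing the role of the stable‑rank parameter. Write $Q_i = M^{(i)T}M^{(i)} + \lambda I$, let $\rho^{(i)} = b^{(i)} - M^{(i)}x^{(i)}$ be the optimal residual, and set $\Delta^{(i)} = y^{(i)} - x^{(i)}$. Since $x^{(i)}$ is the global minimizer of the strongly convex objective $f_i(x) = \|M^{(i)}x - b^{(i)}\|^2 + \lambda\|x\|^2$, its gradient vanishes there, so the first‑order cross terms cancel and $f_i(y^{(i)}) - f_i(x^{(i)}) = \|M^{(i)}\Delta^{(i)}\|^2 + \lambda\|\Delta^{(i)}\|^2 = \Delta^{(i)T}Q_i\Delta^{(i)} \ge 0$. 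Thus it suffices to prove $\sum_i \Delta^{(i)T}Q_i\Delta^{(i)} \le \eps \sum_i f_i(x^{(i)})$. Subtracting the sketched normal equations $(M^{(i)T}S^TSM^{(i)} + \lambda I)y^{(i)} = M^{(i)T}S^TS b^{(i)}$ from the exact ones $(M^{(i)T}M^{(i)} + \lambda I)x^{(i)} = M^{(i)T}b^{(i)}$ and using $\lambda x^{(i)} = M^{(i)T}\rho^{(i)}$ yields the error equation $P_i\Delta^{(i)} = M^{(i)T}(S^TS - I)\rho^{(i)} =: g^{(i)}$, where $P_i = M^{(i)T}S^TSM^{(i)} + \lambda I$. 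Hence $\Delta^{(i)T}Q_i\Delta^{(i)} = g^{(i)T}P_i^{-1}Q_iP_i^{-1}g^{(i)}$.

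Two ingredients then finish the per‑problem bound. First, a one‑sided regularized spectral comparison: if $S$ preserves the norms $\|SM^{(i)}v\|$ on the top (``effective'') singular directions of $M^{(i)}$, then $P_i \succeq \tfrac12 Q_i$, which gives $P_i^{-1}Q_iP_i^{-1} \preceq 4Q_i^{-1}$ and therefore $\Delta^{(i)T}Q_i\Delta^{(i)} \le 4\,g^{(i)T}Q_i^{-1}g^{(i)}$. Here regularization is essential: on directions $v$ with $\|M^{(i)}v\|^2 \le \lambda\|v\|^2$ the additive $\lambda I$ term self‑certifies the lower bound with no sketch control needed, so $S$ must only embed a subspace of dimension $O({\texttt{sd}}_\lambda(M^{(i)})/\eps) = O(s/\eps)$ rather than all of $\mathrm{span}(M^{(i)})$. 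Second, writing $N^{(i)} = M^{(i)}Q_i^{-1/2}$ we have $g^{(i)T}Q_i^{-1}g^{(i)} = \|N^{(i)T}(S^TS - I)\rho^{(i)}\|^2$, and by construction $\|N^{(i)}\|^2 \le 1$ while $\|N^{(i)}\|_F^2 = \sum_j \sigma_j^2/(\sigma_j^2 + \lambda) = {\texttt{sd}}_\lambda(M^{(i)}) \le s$. Applying Lemma~\ref{lem:AMM} with $A = N^{(i)}$, $B = \rho^{(i)}$, $K = \Theta(s)$, and $\gamma = \Theta(\sqrt{\eps})$ — which is precisely what forces $\ell = \Theta(\tfrac1\eps(s + \log(1/\eps)))$ — bounds this quantity by $O(\eps)\,\|\rho^{(i)}\|^2 \le O(\eps)\,f_i(x^{(i)})$, since $f_i(x^{(i)}) = \|\rho^{(i)}\|^2 + \lambda\|x^{(i)}\|^2 \ge \|\rho^{(i)}\|^2$.

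Finally I would assemble the sum and handle the probability. The reason for routing the fine control through Lemma~\ref{lem:AMM} is that the estimate $\E_S[\,g^{(i)T}Q_i^{-1}g^{(i)}\,] \le O(\eps)f_i(x^{(i)})$ can be phrased \emph{in expectation}, so by linearity $\E_S[\sum_i \Delta^{(i)T}Q_i\Delta^{(i)}] \le O(\eps)\sum_i f_i(x^{(i)})$, and a single application of Markov's inequality yields the claimed bound for the whole sum with constant probability.

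\textbf{The main obstacle is exactly this last step.} We want constant success probability with a sketch size $\ell$ depending on $s$ but independent of both $k$ and $d$, so we can afford neither a union bound over the $d$ regression problems (which would cost a $\log d$ factor in $\ell$) nor a subspace embedding of the full rank‑$k$ column spaces. The resolution is to keep every $\eps$‑accurate estimate in expectation and apply Markov only once, to the aggregate, while making the spectral comparison robust: regularization confines the lower distortion factor to the $O(s/\eps)$‑dimensional effective subspace (the complement is certified by the $\lambda\|\cdot\|^2$ term), and a strong tail bound on the upper distortion factor $K_S$ ensures that the rare event in which the comparison $P_i \succeq \tfrac12 Q_i$ fails contributes negligibly to the expectation. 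Checking that these two conditionings — on separate subspaces — compose without reintroducing any rank‑ or $d$‑dependence is the crux of the argument, and is what the sketch size $\ell = \Theta(\tfrac1\eps(s + \log(1/\eps)))$ is calibrated to guarantee.
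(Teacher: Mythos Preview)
Your proposal is correct and follows essentially the same approach as the paper: reduce to a per-problem expectation bound so that a single Markov step handles all $d$ instances, use Lemma~\ref{lem:AMM} with the observation that the relevant normalized matrix (your $N^{(i)} = M^{(i)}Q_i^{-1/2}$, the paper's $U_1$) has Frobenius norm squared equal to ${\texttt{sd}}_\lambda(M^{(i)})$, and handle the spectral comparison by conditioning. The paper packages the argument through the augmented matrix $\hat{M}^{(i)} = \begin{bmatrix} M^{(i)} \\ \sqrt{\lambda}I_k \end{bmatrix}$ and a head/tail split $M = M_h + M_t$ at the threshold $\sigma_i^2 = \lambda$, conditioning on the product $\alpha = c_S([M_h,b])\cdot(1+\Gamma)$ rather than on your event $P_i \succeq \tfrac12 Q_i$ directly; the crux you flag---that the bad-event contribution must be controlled by integrating the tail of the distortion factors rather than by a union bound---is exactly what their Claims~\ref{claim:Gammalargebound}--\ref{claim:alphabounds} carry out, bounding the loss on the bad event by $O(\alpha^2)\|\hat{M}x - \hat{b}\|^2$ and then showing $\Pr[\alpha > 1.1]\cdot\E[\alpha^2 \mid \alpha > 1.1] = O(\eps)$.
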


We note that a simple union bound would incur a dependence of a factor of $\log(d)$ in the sketching dimension $l$. While this might seem mild at first, the algorithms we consider are exponential in $l$, implying that we would be unable to derive polynomial time algorithms for solving weighted low rank approximation even when the input and weight matrix are both of constant rank. Therefore, we need an average case version of sketching guarantees to hold; however, this is not always the case since $l$ is small and applying Lemma~\ref{lem:AMM} na\"ively only gives a probability bound. Ultimately, we must condition on the event of a combination of sketching guarantees holding and carefully analyzing the expectation in separate cases.

\begin{proof}

Let $\hat{S} = \begin{bmatrix} S & 0 \\ 0 & I_k \end{bmatrix} \in \R^{(\ell + k) \times (n + k)}$, $\hat{M}^{(i)} = \begin{bmatrix} M^{(i)} \\ \sqrt{\lambda} I_k \end{bmatrix} \in \R^{(n + k) \times k}$, and $\hat{b}^{(i)} = \begin{bmatrix} b^{(i)} \\ 0 \end{bmatrix} \in \R^{n+k}$. Observe that $\| M^{(i)} x - b^{(i)} \|^2 + \lambda \| x \|^2 = \| \hat{M}^{(i)} x - \hat{b}^{(i)} \|^2 $ and $\| {S} ({M}^{(i)} y - {b}^{(i)}) \|^2 + \lambda \| y \|^2 = \| \hat{S} (\hat{M}^{(i)} y - \hat{b}^{(i)}) \|^2. $

It suffices to prove that, for $1 \leq i \leq d$, $$
    \Ex_{S} \left[ {\| \hat{M}^{(i)} y^{(i)} - \hat{b}^{(i)} \|^2 - \| \hat{M}^{(i)} x^{(i)} - \hat{b}^{(i)} \|^2 } \right]  \leq O(\eps) \cdot \| \hat{M}^{(i)} x^{(i)} - \hat{b}^{(i)} \|^2 $$

because we can sum over all $i$ and apply Markov's inequality to complete the argument.

Fix some $i$ and set $M = M^{(i)}, b = b^{(i)}, \hat{M} = \hat{M}^{(i)}, \hat{b} = \hat{b}^{(i)}$, $x = x^{(i)}$, $y = y^{(i)}$. Let $\hat{U}_b$ be an orthogonal matrix whose columns form an orthonormal basis for the columns of $[\hat{M} \ \hat{b}]$. Now define $\Gamma := \lVert \hat{U}_b^T \hat{S}^T \hat{S} \hat{U}_b - I_{k+1} \rVert$. We let $U_1$ form its first $n$ rows and $U_2$ form the rest.

Since $s < k$, we can have an unbounded condition number if we just look at $c_{\hat{S}}(\hat{M})$ so we need to have a more subtle analysis than in \cite{rsw16}. Instead of simply conditioning on $\Gamma$, let us define $M = M_h + M_t$, where $M_h$ is the component of $M$ corresponding to the span of singular vectors with values that are  $\sigma_i^2 \geq \lambda$. Then, $M_t$ is the orthogonal component to $M_h$ and is a subspace of the span of singular vectors corresponding to values that are $\sigma_i^2 < \lambda$. Since $2 \geq 1 + \frac{\lambda}{\sigma_i^2}$, for $\sigma_i$ corresponding to $M_h$, then the rank of $M_h$ is bounded by $r_h = O({\texttt{sd}}_\lambda (M))$. Since $S$ has at least $\Omega ({\texttt{sd}}_\lambda (M)/\eps)$ rows, with probability 1, the condition number $c_h = c_S ([M_h,b])$ will be finite with probability 1 (note that $b$ can be assumed to be orthogonal to the image of $M$).

Let $\alpha = c_h (1+\Gamma)$ be a product of condition numbers. If $\alpha$ is close to 1, then we are in a good regime so we will condition on $\alpha$.It follows that \begin{align*}  \Ex_{S} \left[ {\| \hat{M} y - \hat{b} \|^2 - \| \hat{M} x - \hat{b} \|^2 } \right]
 &= \Pr \left[ \alpha > 1.1 \right] \cdot \Ex_S \left[ \| \hat{M} y - \hat{b} \|^2 - \| \hat{M} x - \hat{b} \|^2 \big| \alpha > 1.1 \right] \\
 &\ + \Pr \left[ \alpha \leq 1.1 \right] \cdot \Ex_S \left[ \| \hat{M} y - \hat{b} \|^2 - \| \hat{M} x - \hat{b} \|^2 \big| \alpha \leq 1.1 \right]
\end{align*}
\noindent We will now bound the two terms in the sum and our final result follows by combining Claim~\ref{claim:Gammalargebound} and Claim~\ref{claim:gammasmallbound}.
\begin{claim} \label{claim:Gammalargebound}
\begin{align*}\Pr \left[ \alpha > 1.1 \right] \Ex_S \left[ \| \hat{M} y - \hat{b} \|^2 - \| \hat{M} x - \hat{b} \|^2 \big| \alpha > 1.1 \right] \leq O(\eps) \cdot \| \hat{M} x - \hat{b} \|^2
\end{align*}
\end{claim}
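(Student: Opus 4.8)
The plan is to bound the single unconditional quantity
$\Ex_S[(\|\hat M y - \hat b\|^2 - \|\hat M x - \hat b\|^2)\,\mathbf{1}[\alpha > 1.1]]$, since $\Pr[\alpha>1.1]\,\Ex_S[\,\cdots\mid\alpha>1.1\,]$ is exactly this expectation and $\|\hat M x - \hat b\|^2$ is the nonnegative optimum of the (fixed-$i$) regularized regression. First I would record the exact identity coming from first-order optimality of the true solution $x$: since $\hat M^T(\hat M x - \hat b) = 0$, the optimal residual is orthogonal to the column space of $\hat M$, so $\|\hat M y - \hat b\|^2 - \|\hat M x - \hat b\|^2 = \|\hat M(y-x)\|^2 \ge 0$. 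Thus the difference is always nonnegative and it suffices to upper bound it on the bad event.

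The core step is a \emph{pointwise} relative bound $\|\hat M y - \hat b\|^2 \le \alpha\cdot\|\hat M x - \hat b\|^2$, i.e. $\|\hat M y - \hat b\|^2 - \|\hat M x - \hat b\|^2 \le (\alpha - 1)\,\|\hat M x - \hat b\|^2$, which I would obtain in two moves. Because $y$ minimizes the sketched objective while $x$ is feasible for it, $\|\hat S(\hat M y - \hat b)\|^2 \le \|\hat S(\hat M x - \hat b)\|^2$; writing $\hat M x - \hat b = \hat U_b w_0$ and using $\lVert \hat U_b^T \hat S^T \hat S \hat U_b - I_{k+1}\rVert = \Gamma$ gives the upper-distortion estimate $\|\hat S(\hat M x - \hat b)\|^2 \le (1+\Gamma)\,\|\hat M x - \hat b\|^2$. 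For the matching lower-distortion estimate I would transfer the sketched norm of the residual $\hat M y - \hat b$ back to its true norm at the cost of the head condition number $c_h = c_S([M_h,b])$: the head part of the residual is preserved up to $c_h$ by definition, while the tail part is controlled for free, since on the tail $\sigma_i^2 < \lambda$ and its contribution is dominated by the regularizer $\lambda\|y\|^2$ already present inside $\|\hat M y - \hat b\|^2$. Combining yields $\|\hat M y - \hat b\|^2 \le c_h(1+\Gamma)\,\|\hat M x - \hat b\|^2 = \alpha\,\|\hat M x - \hat b\|^2$.

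With the pointwise bound, the term reduces to $\|\hat M x - \hat b\|^2\cdot\Ex_S[(\alpha-1)\,\mathbf{1}[\alpha>1.1]]$, so it remains to show $\Ex_S[(\alpha-1)\,\mathbf{1}[\alpha>1.1]] = O(\eps)$. I would get strong tail and moment control on $\alpha = c_h(1+\Gamma)$ from two facts enabled by regularization. For $\Gamma = \lVert U_1^T(S^TS - I)U_1\rVert$ (where $U_1$ is the top block of $\hat U_b$), regularization forces $\|U_1\|\le 1$ and $\|U_1\|_F^2 = O({\texttt{sd}}_\lambda(M)) = O(s)$, so Lemma~\ref{lem:AMM} with $K = \Theta(s)$ and the prescribed $\ell = \Theta(\tfrac1\eps(s+\log(1/\eps)))$ rows bounds $\Gamma$ by a small constant except with probability $\eps^{\Omega(1/\eps)}$, and bounds its low-order moments by $O(1)$. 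For $c_h$, the subspace $[M_h,b]$ has rank $O(s)\ll\ell$, so $c_h$ is the condition number of a tall random embedding of an $O(s)$-dimensional subspace, giving $c_h \le 1.05$ except with probability $\eps^{\Omega(1/\eps)}$ and $\Ex[c_h^2]=O(1)$. Hence $\Pr[\alpha>1.1]\le \eps^{\Omega(1/\eps)}$ while $\Ex[(\alpha-1)^2]=O(1)$, and Cauchy--Schwarz gives $\Ex_S[(\alpha-1)\mathbf{1}[\alpha>1.1]] \le \sqrt{\Ex[(\alpha-1)^2]}\cdot\sqrt{\Pr[\alpha>1.1]} \le O(1)\cdot\eps^{\Omega(1/\eps)} = O(\eps)$, closing the claim.

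The hard part is the lower-distortion move: in the bad regime $\hat S$ can nearly collapse directions of the $(k+1)$-dimensional subspace, so one cannot simply invert a global subspace embedding. The resolution, and the technical point separating this from \cite{rsw16}, is the head/tail split — only the rank-$O(s)$ head must be well conditioned (captured by $c_h$), and the tail residual is absorbed into the regularizer — combined with the observation that $U_1$ has stable rank $O(s)$ rather than $O(k)$, which is exactly what allows $\ell = \Theta(\tfrac1\eps(s+\log(1/\eps)))$ rows (possibly far fewer than $k$) to push $\Pr[\alpha>1.1]$ below any polynomial in $1/\eps$.
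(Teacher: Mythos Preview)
Your outline is the paper's strategy: establish a pointwise bound $\|\hat M y-\hat b\|^2 \le \text{poly}(\alpha)\cdot\|\hat M x-\hat b\|^2$ via upper and lower distortion through $\hat S$, then show $\Ex_S[\text{poly}(\alpha)\,\mathbf{1}[\alpha>1.1]]=O(\eps)$. The gap is in your lower-distortion step, where you assert the factor is $c_h$ alone.

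The sentence ``the tail part is controlled for free, since $\ldots$ its contribution is dominated by the regularizer'' is where the argument breaks. It is true that the \emph{unsketched} tail satisfies $\|M_t y\|^2 \le \lambda\|y\|^2$, and this lets you bound $\|\hat M y-\hat b\|^2 \le \|M_h y - b\|^2 + 2\lambda\|y\|^2$. It is also true that $\|M_h y - b\|^2 \le c_h\,\|S(M_h y - b)\|^2$. But the sketched objective contains $\|S(My-b)\|^2$, not $\|S(M_h y - b)\|^2$, and these differ by the \emph{sketched} tail $SM_t y$. Sketching can inflate norms, so $\|SM_t y\|$ is not controlled by the regularizer; the only available handle is $\|SM_t y\|^2 \le (1+2\Gamma)\lambda\|y\|^2$, which costs a factor involving $\Gamma$. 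Carrying this through (splitting into the cases $\|S(M_h y-b)\|\gtrless 2\|SM_t y\|$, as the paper does in Claim~\ref{claim:condition_number}) gives $1/\kappa_{\hat S}([\hat M\ \hat b]) \le O(c_h(1+\Gamma)) = O(\alpha)$, not $c_h$. Combined with the upper distortion $K_{\hat S}\le 1+\Gamma\le\alpha$, the correct pointwise bound is $\|\hat M y-\hat b\|^2 \le O(\alpha^2)\,\|\hat M x-\hat b\|^2$, one power of $\alpha$ more than you claim. Your proof is salvageable with this correction: replace $(\alpha-1)$ by $O(\alpha^2)$ in the second half.

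On the tail control, your route (Cauchy--Schwarz with $\Pr[\alpha>1.1]\le\eps^{\Omega(1/\eps)}$ and a constant moment bound) differs from the paper's. The paper (Claim~\ref{claim:alphabounds}) proves only the weaker polynomial tail $\Pr[\alpha>t]\le O(\eps)\,t^{-\Omega(1)}$ directly from Lemma~\ref{lem:AMM} and the cited condition-number bound, then integrates $\int_{1.1}^\infty t\,\Pr[\alpha>t]\,dt = O(\eps)$. Your super-polynomial tail is plausible for Gaussian sketches with $\ell=\Theta((s+\log(1/\eps))/\eps)$ rows, but you have not justified it from the stated tools, and the moment bound $\Ex[\alpha^{2}]=O(1)$ requires handling the dependence between $c_h$ and $\Gamma$ (e.g., via fourth moments and another Cauchy--Schwarz). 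The paper's integration approach sidesteps both issues.
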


\begin{proof}
To bound our expression, note that
\begin{align*}
    \| \hat{M} y - \hat{b} \|^2 \leq \frac{1}{\kappa_{\hat{S}} ([\hat{M} \ \hat{b}])} \| \hat{S} (\hat{M} y - \hat{b}) \|^2 \leq \frac{1}{\kappa_{\hat{S}} ([\hat{M} \ \hat{b}])} \| \hat{S} (\hat{M} x - \hat{b}) \|^2  \leq \frac{K_{\hat{S}} ([\hat{M} \ \hat{b}])}{\kappa_{\hat{S}} ([\hat{M} \ \hat{b}])}  \| \hat{M} x - \hat{b} \|^2
\end{align*}

By Claim~\ref{claim:condition_number},  $\| \hat{M} y - \hat{b} \|^2 \leq 10\alpha^2 \| \hat{M} x - \hat{b} \|^2 $. By Claim~\ref{claim:alphabounds}, \begin{align*}
\Ex_S \left[ \| \hat{M} y - \hat{b} \|^2 - \| \hat{M} x - \hat{b} \|^2 \big| \alpha > 1.1 \right]
\leq 10\| \hat{M} x - \hat{b} \|^2 \Ex_S \left[\alpha^2 \big| \alpha > 1.1 \right]
\leq O(\epsilon)\| \hat{M} x - \hat{b} \|^2
\end{align*}
\end{proof}

\begin{claim}\label{claim:condition_number}
$K_{\hat{S}}([\hat{M} \ \hat{b}]) \leq \alpha$ and $\frac{1}{\kappa_{\hat{S}}([\hat{M} \ \hat{b}])} \leq 10\alpha$
\end{claim}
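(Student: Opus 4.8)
The plan is to pass to the lifted vectors and reduce both inequalities to a single spectral identity together with the regularization structure. As $(x,t)$ ranges over $\R^k \times \R$, the generic element of the column space of $[\hat M \ \hat b]$ is $\hat M x + t \hat b$, which stacks $Mx+tb$ over $\sqrt{\lambda}\,x$; hence $\|[\hat M\ \hat b](x,t)\|^2 = \|Mx+tb\|^2 + \lambda\|x\|^2$ and $\|\hat S [\hat M\ \hat b](x,t)\|^2 = \|S(Mx+tb)\|^2 + \lambda\|x\|^2$. Since the columns of $\hat U_b$ span this space, every $[\hat M\ \hat b](x,t)$ equals $\hat U_b z$ for a unique $z$ with $\|z\| = \|[\hat M\ \hat b](x,t)\|$. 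First I would record the identity $\hat U_b^T \hat S^T \hat S \hat U_b - I = U_1^T (S^T S - I) U_1$, which holds because the bottom block of $\hat S$ is $I_k$ and hence preserves the $U_2$ part of $\hat U_b$ exactly; its operator norm is $\Gamma$ by definition.

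The upper bound is then immediate: $\|\hat S \hat U_b z\|^2 = \|z\|^2 + z^T U_1^T(S^T S - I) U_1 z \leq (1+\Gamma)\|z\|^2$, so $K_{\hat S}([\hat M\ \hat b]) \leq 1+\Gamma \leq \alpha$ since $c_h \geq 1$. The same computation gives $\|\hat S \hat U_b z\|^2 \geq (1-\Gamma)\|z\|^2$, but this is vacuous once $\Gamma \geq 1$, which is exactly the regime forced by compressing $k+1$ directions into $\ell < k$ rows. Recovering a usable lower bound is the crux, and is the whole reason for introducing $c_h$ and the head/tail split.

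For the bound on $1/\kappa_{\hat S}$ I would split $Mx + tb = p + q$ with $p = M_h x + tb$ lying in the column space of $[M_h\ b]$ and $q = M_t x$. Two structural facts drive the argument: regularization gives $\|q\| = \|M_t x\| \leq \sqrt{\lambda}\,\|x\|$ because every tail singular value satisfies $\sigma_i^2 < \lambda$; and the definition of $c_h = c_S([M_h\ b])$ gives $\|Sp\|^2 \geq c_h^{-1}\|p\|^2$. The missing ingredient is an upper bound on $\|Sq\|$, since $S$ could in principle blow up the tail direction; I would obtain it by bootstrapping from the upper bound just proved. Applying $K_{\hat S}([\hat M\ \hat b]) \leq 1+\Gamma$ to the auxiliary vector $(x',0)$, where $x'$ is the projection of $x$ onto the tail right singular vectors (so $Mx' = M_t x = q$ and $\|x'\| \leq \|x\|$), yields $\|Sq\|^2 \leq 2(1+\Gamma)\lambda\|x\|^2$.

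With these three estimates I would finish by a short case analysis, using $p \perp q$ under the normalization $b \perp \mathrm{image}(M)$ to keep constants clean. If $\|Sq\|^2 \leq \tfrac{1}{4c_h}\|p\|^2$, the reverse triangle inequality gives $\|S(p+q)\|^2 \geq \tfrac{1}{4c_h}\|p\|^2$, and adding the intact term $\lambda\|x\|^2$ dominates $\tfrac{1}{10\alpha}\|[\hat M\ \hat b](x,t)\|^2$. Otherwise $\|Sq\|^2 > \tfrac{1}{4c_h}\|p\|^2$, which together with $\|Sq\|^2 \leq 2(1+\Gamma)\lambda\|x\|^2$ forces $\|p\|^2 \leq 8\alpha\lambda\|x\|^2$; then $\|[\hat M\ \hat b](x,t)\|^2 \leq 10\alpha\lambda\|x\|^2$, while the regularization term alone gives $\|\hat S[\hat M\ \hat b](x,t)\|^2 \geq \lambda\|x\|^2 \geq \tfrac{1}{10\alpha}\|[\hat M\ \hat b](x,t)\|^2$. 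Either way $1/\kappa_{\hat S}([\hat M\ \hat b]) \leq 10\alpha$. The main obstacle is precisely the control of $\|Sq\|$: the naive two-sided distortion estimate collapses when $\Gamma \geq 1$, and the fix rests on observing that any direction $S$ nearly annihilates must lie in the tail subspace, where regularization makes it cheap.
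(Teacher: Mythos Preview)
Your proposal is correct and follows essentially the same approach as the paper: the upper bound via the identity $\hat U_b^T \hat S^T \hat S \hat U_b - I = U_1^T(S^TS-I)U_1$ is exactly what the paper does, and for the lower bound you use the same head/tail split $M = M_h + M_t$, the same bound $\|SM_t x\|^2 \lesssim (1+\Gamma)\lambda\|x\|^2$ obtained from the just-proved upper distortion, and the same two-case analysis (reverse triangle inequality when the head term dominates, regularization alone when the tail forces $\|p\|^2$ small). The only cosmetic difference is your case threshold $\|Sq\|^2 \lessgtr \tfrac{1}{4c_h}\|p\|^2$ versus the paper's $\|S(M_h x - b)\| \lessgtr 2\sqrt{\lambda(1+2\Gamma)}\|x\|$, but both implement the identical dichotomy and land on the same constant $10\alpha$.
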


\begin{proof}

First, we bound $K_{\hat{S}}([\hat{M} \ \hat{b}])$. By definition of $\Gamma$,
$\|S( M x - b) \|^2 + \lambda \| x \|^2 \leq (1+\Gamma) (\| M x - b \|^2 + \lambda \| x \|^2 ) $ so $K_{\hat{S}} ([\hat{M} \ \hat{b}]) \leq 1 + \Gamma \leq \alpha$.

More importantly, we want to bound $\frac{1}{\kappa_{\hat{S}}([\hat{M} \ \hat{b}])}$. First, we claim that $\| S M_t x \| \leq \sqrt{\lambda (1+2\Gamma)} \| x \|$. We may assume $x$ lies entirely in the column space of $M_t$ and by definition of $M_t$, we know that $\| M x \|^2 =  \| M_t x \|^2 \leq {\lambda} \| x \|^2$. Now, by the definition of $\Gamma$,  $\|SM_t x \|^2  = \|S M x \|^2$ which is at most $(1+\Gamma)\| M x \|^2 + \Gamma\lambda \|x\|^2 \leq (1+2\Gamma)\lambda \|x\|^2$.

We now consider two cases: one where $\|S (M_hx-b) \|$ is at least $ 2\sqrt{\lambda(1+2\Gamma)} \| x \|$ and one where $\|S(M_hx-b) \| < 2\sqrt{\lambda (1+2\Gamma)} \| x \|$.

For all $x$ such that $\|S (M_hx-b) \| \geq 2\sqrt{\lambda (1+2\Gamma)} \| x \|$, we rewrite $\|S(Mx-b)\|^2 = \|S(M_hx-b) + SM_tx\|^2$. Then, by Cauchy-Schwarz,
\begin{align*}
    &\ \ \ \ \| S(Mx-b) \|^2  \geq \| S (M_hx-b)\|^2 - 2 |\langle S (M_hx-b), SM_tx \rangle| \\
    &\geq \| S (M_hx-b) \|^2 - 2\| S (M_hx-b) \| \| S M_t x \| = \| S (M_hx-b) \| (\| S (M_hx-b) \| - \| S M_t x \|) \\
    & \geq  0.5\|S (M_hx-b) \|^2 \geq (0.5/c_h) \| M_hx-b \|^2,
\end{align*}
where the fourth line follows since $\|S(M_hx-b)\| \geq 2\sqrt{\lambda (1+2\Gamma)} \|x\| \geq 2\|SM_tx\|$ and the fifth line follows from definition of $c_h$.

Finally, this implies
\begin{align*}
    & \ \ \ \|S( M x - b) \|^2 + \lambda \| x \|^2 \geq (0.5/c_h) (\| M_h x - b \|^2 + 2\lambda \| x\|^2) \\
    &\geq (0.5/c_h) (\| M_h x - b \|^2 + \|M_tx\|^2 + \lambda \| x\|^2) \geq (1/2c_h) (\| Mx-b\|^2 + \lambda \| x\|^2)
\end{align*}
where the first line follows since $c_h > 1$, the second line follows from $\|M_tx\|^2 < \lambda\|x\|^2$ and the last line from orthogonality.

Now consider all $x$ such that $\|S (M_hx-b) \|$ is less than $ 2\sqrt{\lambda K_S(M)} \| x \| $. This means $\| M_hx-b \|$ is less than $2\sqrt{c_h \lambda K_S(M)} \| x \|$. Then, \begin{align*}
    & \|Mx-b\|^2 + \lambda \|x\|^2 \leq 4c_h(1+2\Gamma) \lambda \|x\|^2 + \lambda \|x\|^2 \\
    &\leq \frac{1}{4c_h(1+2\Gamma) + 1} \cdot \lambda \|x\|^2 \leq \frac{1}{5c_h(1+2\Gamma) } \cdot (\|S(Mx-b)\|^2 + \lambda \|x\|^2)
\end{align*}

Together, we conclude that $\frac{1}{\kappa_{\hat{S}} ([\hat{M} \ \hat{b}])} \leq \max(5c_h(1+2\Gamma), 2c_h) \leq 10\alpha$, where $\alpha = c_h(1+\Gamma)$.

\end{proof}

\begin{claim}\label{claim:alphabounds}
$$\Pr\left[\alpha > 1.1\right]\Ex_S \left[\alpha^2  \Big| \alpha > 1.1\right] = O(\eps)$$
\end{claim}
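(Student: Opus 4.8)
The plan is to bound the truncated second moment $\Ex_S[\alpha^2\,\Indicator[\alpha>1.1]]$, which is exactly the quantity $\Pr[\alpha>1.1]\,\Ex_S[\alpha^2\mid\alpha>1.1]$ in the claim, by showing that both factors of $\alpha=c_h(1+\Gamma)$ have tails decaying sub-Gaussianly in the deviation threshold. First I would rewrite $\Gamma$ to expose its effective dimension. Splitting $\hat U_b=[U_1;U_2]$ into its top $n$ and bottom $k$ rows and using $\hat S=\mathrm{diag}(S,I_k)$ together with $U_1^T U_1+U_2^T U_2=I_{k+1}$, the bottom block cancels and $\Gamma=\|U_1^T(S^T S-I)U_1\|$. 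The crucial point is that although $U_1$ has up to $k+1$ columns, its Frobenius energy is tiny: from the SVD $M=U\Sigma V^T$ one computes that the $\hat M$-part of $U_1$ equals $U\,\mathrm{diag}(\sigma_i/\sqrt{\sigma_i^2+\lambda})$, and (using $b\perp\mathrm{im}(M)$) the $\hat b$-column contributes $1$, so $\|U_1\|_F^2={\texttt{sd}}_\lambda(M)+1\le s+1$ while $\|U_1\|\le 1$. Thus $U_1$ behaves like an $O(s)$-dimensional object, which is exactly what lets a sketch with only $\ell=\Theta(\frac1\eps(s+\log(1/\eps)))$ rows, possibly far fewer than $k$, still act as an embedding on it.

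Next I would convert the single-threshold guarantee of Lemma~\ref{lem:AMM} into a genuine tail bound. Applying it with $A=B=U_1$ and $K=\Theta(s)$ gives $\|U_1\|^2+\|U_1\|_F^2/K=O(1)$, so for every threshold $\gamma$ in the valid regime $\gamma\gtrsim\sqrt{s/\ell}$ one obtains $\Pr[\Gamma>\gamma]\le\exp(-\Omega(\gamma^2\ell-s))=\exp(-\Omega(\gamma^2(s+\log(1/\eps))/\eps))$. An identical argument applied to an orthonormal basis of the column span of $[M_h\ b]$, whose rank is $O({\texttt{sd}}_\lambda(M))$ by the definition of $M_h$, shows that the subspace-embedding condition number obeys the same bound $\Pr[c_h>1+\gamma]\le\exp(-\Omega(\gamma^2(s+\log(1/\eps))/\eps))$.

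Finally I would combine these via a dyadic (layer-cake) decomposition. Since $c_h,(1+\Gamma)\ge 1$, the event $\alpha>\tau$ forces $c_h>\sqrt\tau$ or $\Gamma>\sqrt\tau-1$, so $\Pr[\alpha>\tau]\le 2\exp(-\Omega(\tau(s+\log(1/\eps))/\eps))$ for $\tau\ge 1.1$. Bounding $\Ex_S[\alpha^2\Indicator[\alpha>1.1]]\le\sum_{j\ge0}(1.1\cdot2^{j+1})^2\,\Pr[\alpha>1.1\cdot2^j]$, the polynomial growth $4^{j}$ is overwhelmed by the exponential decay, so the sum is dominated by its $j=0$ term, which is already $\exp(-\Omega((s+\log(1/\eps))/\eps))=\eps^{\Omega(1/\eps)}=O(\eps)$. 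This yields $\Ex_S[\alpha^2\Indicator[\alpha>1.1]]=O(\eps)$, as claimed.

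I expect the main obstacle to be the structural step in the first two paragraphs: recognizing that the relevant objects ($U_1$ and the span of $[M_h\ b]$) have effective/statistical dimension $O(s)$ rather than $k$, so that a sketch of size $s/\eps\ll k$ still embeds them, and then upgrading the one-shot bound of Lemma~\ref{lem:AMM} into a sub-Gaussian tail valid across all thresholds used in the second-moment integral. Care is also needed to keep every invoked threshold inside the lemma's valid range $\gamma\gtrsim\sqrt{s/\ell}$; this is why truncating at the constant $1.1$ (bounded away from $1$) matters, and why the constant hidden in $\ell=\Theta(\cdot)$ must be taken large enough.
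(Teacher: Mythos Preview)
Your proposal is correct and follows essentially the same approach as the paper: both rewrite $\Gamma=\|U_1^T(S^TS-I)U_1\|$, exploit the structural bound $\|U_1\|_F^2\le{\texttt{sd}}_\lambda(M)+1$ to apply Lemma~\ref{lem:AMM} for the $\Gamma$ tail, use that $[M_h\ b]$ has rank $O(s)$ for the $c_h$ tail, and then integrate the union-bounded tail of $\alpha$. The only cosmetic differences are that the paper cites \cite{CD08} (rather than reusing Lemma~\ref{lem:AMM}) for the $c_h$ tail and writes the second-moment bound as a continuous integral rather than a dyadic sum.
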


\begin{proof}
Note that for $t > 1$, we have \begin{align*}
    \Pr\left(\alpha > t\right) &\leq \Pr\left(1+\Gamma > \sqrt{t}\right) + \Pr\left(1+\Gamma \leq \sqrt{t} \text{ and } c_h > \sqrt{t}\right)\\
    &\leq \Pr\left(1+\Gamma > \sqrt{t}\right) + \Pr\left(c_h > \sqrt{t}\right)
\end{align*}
By Lemma 12 of \cite{ACW17}, note that  $\|U_1\|_F^2$ is at most ${\texttt{sd}}_\lambda (M) + 1$ and $\|U_1 \| < 1$. Now, we can express $\Gamma = \lVert \hat{U}_b^T \hat{S}^T \hat{S} \hat{U}_b - I_{k+1} \rVert$ which is equal to $ \lVert U_1^T S^T S U_1 - U_1^T U_1 \rVert $. By Lemma~\ref{lem:AMM} with $A = B = U_1$ and $\gamma = \frac{\sqrt{t}}{\| U_1 \|^2}$, then for any $t > 1.1$, we have $$\Pr [ 1+\Gamma > \sqrt{t} ] < \eps t^{-\Omega(1)}$$ since $\ell$ is larger than $\Omega(\frac{1}{\gamma^2} (\|U_1\|_F^2/\|U_1\|^2 + \log(t / \eps))) = \Omega({\texttt{sd}}_\lambda(M) + \log (1 / \eps))$.

Then, by \cite{CD08}, since $M_h$ only has less than $O({\texttt{sd}}_\lambda(M))$ columns, then since $\ell > {\texttt{sd}}_\lambda(M)/\eps$, we have for $t > 1.1$, $\Pr[c_S (M) > \sqrt{t}] = \Theta(t^{-1/\eps}) < \eps t^{-\Omega(1)}$. Thus, we conclude that \begin{align*}
    \Pr\left[\alpha > 1.1\right] \Ex_S \left[\alpha^2  \Big| \alpha > 1.1\right] \leq O(1)\Pr[\alpha > 1.1] + \int_{1.1}^\infty t\Pr[ \alpha > t] \, dt = O(\epsilon)
\end{align*}
\end{proof}

\begin{claim}\label{claim:gammasmallbound}
$\Ex_S \left[ \| \hat{M} y - \hat{b} \|^2 - \| \hat{M} x - \hat{b} \|^2 \big| \alpha \leq 1.1 \right] \leq O(\eps) \cdot \| \hat{M} x - \hat{b} \|^2 $
\end{claim}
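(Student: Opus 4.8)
The plan is to exploit that on the event $\{\alpha \le 1.1\}$ the sketch behaves like a genuine constant-distortion subspace embedding, so that a standard sketch-and-solve argument applies, with the $\eps$ savings coming entirely from an approximate-matrix-multiplication bound on a single cross term. First I would reduce to a cleaner quantity: since $x$ is the exact regularized minimizer, the normal equations give $\hat{M}^T(\hat{M}x - \hat{b}) = 0$, so writing $\hat{r} := \hat{M}x - \hat{b}$ and $w := y - x$, Pythagoras yields $\|\hat{M}y - \hat{b}\|^2 - \|\hat{M}x - \hat{b}\|^2 = \|\hat{M}w\|^2$. Thus it suffices to bound $\Ex_S[\|\hat{M}w\|^2 \mid \alpha \le 1.1]$. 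Next I would note that $\alpha \le 1.1$ forces $\Gamma \le 0.1$ (because $c_h \ge 1$), so if $\hat{U}$ is an orthonormal basis for the column space of $\hat{M}$, a sub-basis of $\hat{U}_b$, then $\|\hat{U}^T \hat{S}^T\hat{S}\hat{U} - I\| \le \Gamma \le 0.1$, and hence $\hat{U}^T\hat{S}^T\hat{S}\hat{U}$ is invertible with inverse of norm at most $1/0.9$.

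Then I would solve the sketched normal equations. From $\hat{M}^T\hat{S}^T\hat{S}(\hat{M}y - \hat{b}) = 0$ and $\hat{M}y - \hat{b} = \hat{M}w + \hat{r}$, writing $\hat{M} = \hat{U}C$ with $C$ invertible (full column rank, guaranteed by the $\sqrt{\lambda}I_k$ block) and $\hat{M}w = \hat{U}z$, one cancels $C^T$ to obtain $\hat{U}^T\hat{S}^T\hat{S}\hat{U}\,z = -\hat{U}^T\hat{S}^T\hat{S}\hat{r}$, so $\|\hat{M}w\| = \|z\| \le (1/0.9)\,\|\hat{U}^T\hat{S}^T\hat{S}\hat{r}\|$. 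Since $\hat{U}^T\hat{r} = 0$, the right-hand side is a pure cross term $\hat{U}^T(\hat{S}^T\hat{S} - I)\hat{r}$; and because $\hat{S}^T\hat{S} - I$ annihilates the regularization block of $\hat{S}$, this equals $U_1^T(S^TS - I)r_1$ with $U_1$ the top $n$ rows of $\hat{U}_b$ and $r_1 = Mx - b$. Hence, on $\{\alpha \le 1.1\}$, $\|\hat{M}w\|^2 \le O(1)\,\|U_1^T(S^TS - I)r_1\|^2$.

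The main obstacle is that I need an \emph{expectation} bound, and the conditioning on $\{\alpha \le 1.1\}$ could in principle correlate with the size of the cross term; the naive high-probability bound from Lemma~\ref{lem:AMM} is not enough, since $\ell$ is too small to absorb a union bound and we cannot afford the $O(1)$-distortion loss that a plain embedding would give. I would handle this in two moves. By Claim~\ref{claim:alphabounds}, $\Pr[\alpha > 1.1] = O(\eps)$, so $\Pr[\alpha \le 1.1] \ge 1/2$; thus the conditional expectation is at most twice the unconditional one, and dropping the nonnegative indicator $\mathbf{1}_{\{\alpha \le 1.1\}}$ reduces the task to showing $\Ex_S[\|U_1^T(S^TS - I)r_1\|^2] \le O(\eps)\,\|r_1\|^2 \le O(\eps)\,\|\hat{r}\|^2$. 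For this I would integrate the tail of Lemma~\ref{lem:AMM} applied with $A = U_1$, $B = r_1$, and $K = \Theta(s+1)$: using $\|U_1\| \le 1$ and $\|U_1\|_F^2 \le s+1$ (Lemma~12 of \cite{ACW17}), the lemma gives $\Pr[\|U_1^T(S^TS - I)r_1\|^2 > O(\gamma^2)\|r_1\|^2] < \delta$ whenever $\ell = \Omega(\gamma^{-2}(K + \log(1/\delta)))$. Solving for $\delta$ at each threshold $u = O(\gamma^2)$ and integrating $\Ex[Z] = \int_0^\infty \Pr[Z > u]\,du$ yields $\Ex_S[\|U_1^T(S^TS-I)r_1\|^2] = O(K/\ell)\,\|r_1\|^2$. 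Finally, since $\ell = \Theta(\tfrac1\eps(s + \log(1/\eps)))$ and $K = \Theta(s+1)$, we have $K/\ell = O(\eps)$, which combined with the two moves above closes the argument.
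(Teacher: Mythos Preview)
Your proposal is correct and follows essentially the same route as the paper's proof: both use the normal equations plus Pythagoras to reduce to $\|\hat{M}(y-x)\|^2$, invoke $\Gamma\le 0.1$ on $\{\alpha\le 1.1\}$ to invert $\hat{U}^T\hat{S}^T\hat{S}\hat{U}$ up to a constant, use the sketched normal equations to rewrite the remaining term as the cross term $\hat{U}^T\hat{S}^T\hat{S}(\hat{U}\tilde{x}-\hat{b})$, drop the conditioning via $\Pr[\alpha\le 1.1]\ge 1-O(\eps)$, and then integrate the tail bound from Lemma~\ref{lem:AMM}. The only cosmetic differences are that you make the cancellation of the $\sqrt{\lambda}I_k$ block explicit and phrase the final step as a continuous tail integral yielding $O(K/\ell)=O(\eps)$, whereas the paper sums over discrete levels $t$ with $\Pr[E_t]\le O(\eps^{t-1})$; these are equivalent.
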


\begin{proof}
The normal equations for $x$ tell us that $\hat{M}^T (\hat{M} x - \hat{b})$ is $0$. Thus, by the Pythagorean Theorem, $$\| \hat{M} y - \hat{b} \|^2 - \| \hat{M} x - \hat{b} \|^2 = \| \hat{M} (y - x) \|^2 = \| \tilde{y} - \tilde{x} \|^2$$ where $\hat{U} \tilde{y} = \hat{M} y$ and $\hat{U} \tilde{x} = \hat{M} x$.

We have $\| \tilde{y} - \tilde{x} \| \leq \| (\hat{U}^T \hat{S}^T \hat{S} \hat{U} - I_k)(\tilde{y} - \tilde{x}) \| + \|\hat{U}^T \hat{S}^T \hat{S} \hat{U} (\tilde{y} - \tilde{x}) \| $, so since we are conditioning on $\alpha \leq 1.1$, we know that $\Gamma \leq 0.1$, which implies that $\|\tilde{y} - \tilde{x}\| \leq O(1) \|\hat{U}^T \hat{S}^T \hat{S} \hat{U} (\tilde{y} - \tilde{x})\|.$

Since $\Pr [\alpha \leq 1.1] \geq 1 - O(\eps)$, then \begin{align*} \Ex_S \left[ \| \hat{M} y - \hat{b} \|^2 - \| \hat{M} x - \hat{b} \|^2 \big| \alpha \leq 1.1 \right] \leq O(1) \cdot \Ex_S \left[ \|\hat{U}^T \hat{S}^T \hat{S} \hat{U} (\tilde{y} - \tilde{x}) \|^2 \right].
\end{align*} and the normal equations for $\tilde{y}$ tell us that $\hat{U}^T \hat{S}^T \hat{S} (\hat{U} \tilde{y} - \hat{b})$ is $0$. Thus, $$\Ex_S \left[ \|\hat{U}^T \hat{S}^T \hat{S} \hat{U} (\tilde{y} - \tilde{x}) \|^2 \right] = \Ex_S \left[ \|\hat{U}^T \hat{S}^T \hat{S} (\hat{U} \tilde{x} - \hat{b}) \|^2 \right]. $$

Let $t$ be a natural number. Note that $S$ has $\Omega(\frac{1}{\eps} (s + \log(1 / \eps)) =  \Omega(\frac{1}{t \eps} (s + \log((1 / \eps)^t))$ rows. Note that $\hat{S}$ only sketches $U_1$ but leaves $U_2$ un-sketched. By Lemma~\ref{lem:AMM} with $A = U_1$, $B = U_1 \tilde{x} - b$, $\gamma = \sqrt{t \eps} / \| U_1 \|$ we have \begin{equation}\label{eq:claim2tool}
\Pr \left[ \|\hat{U}^T \hat{S}^T \hat{S} (\hat{U} \tilde{x} - \hat{b}) \| > \sqrt{t \eps} \| \hat{U} \tilde{x} - \hat{b} \| \right] < O(\eps^t)
\end{equation}

Let $E_t$ denote the event that $\|\hat{U}^T \hat{S}^T \hat{S} (\hat{U} \tilde{x} - \hat{b}) \| $ is between $\sqrt{(t - 1) \eps} \| \hat{M} x - \hat{b} \|$ and $\sqrt{t \eps} \| \hat{M} x - \hat{b}$. By inequality~(\ref{eq:claim2tool}) we have \begin{align*} &\ \Ex_{S} \left[ {\|\hat{U}^T \hat{S}^T \hat{S} (\hat{U} \tilde{x} - \hat{b}) \|^2} \right] \leq \| \hat{M} x - \hat{b} \|^2 \sum_{t = 1}^\infty t \eps \cdot \Pr \left[ E_t \| \right] \\
 &\leq \eps \cdot \| \hat{M} x - \hat{b} \|^2 \sum_{t = 1}^\infty O(\eps^t) \cdot t \leq O(\eps) \cdot \| \hat{M} x - \hat{b} \|^2
\end{align*} and we are done.
\end{proof}

\end{proof}

\section{Algorithms}

In this section, we present a fast algorithm for solving regularized weighted low rank approximation. Our algorithm exploits the structure of low-rank approximation as a sum of regression problems and applies the main structural theorem of our previous section to significantly reduce the number of variables in the optimization process. Note that we can write
\begin{align*}
    &\| W \circ (UV - A) \|_F^2 = \sum_{i = 1}^n \|U_{i,;} V D_{W_{i,;}} - A_{i,;} D_{W_{i,;}}  \|^2 = \sum_{j = 1}^d \| D_{W_{;,j}} U V_{;,j} - D_{W_{;,j}} A_{;,j} \|^2
\end{align*}
\subsection{Using the Polynomial Solver with Sketching}

Now we sample Gaussian sketch matrices $S'$ from $\R^{d \times \Theta(\frac{s}{\eps}) \log(1 / \eps)}$ and $S''$ from $\R^{ \Theta(\frac{s}{\eps}) \log(1 / \eps) \times n}$. We let $P^{(i)}$ denote $V D_{W_{i,;}} S'$ and $Q^{(j)}$ denote $S'' D_{W_{;,j}} U$.

The matrices $P^{(i)}$ and $Q^{(j)}$ can be encoded using $\Theta(\frac{s+\log(1/\eps)}{\eps}) kr$ variables. For fixed $P^{(i)}$ and $Q^{(j)}$ we can define $$\tilde{U} = \argmin_{U \in \R^{n \times k}} \sum_{i = 1}^n \| U_{i,;} P^{(i)} - A_{i,;} D_{W_{i,;}} S' \|^2 + \lambda \| U_{i,;} \|^2 $$ and $$\tilde{V} = \argmin_{V \in \R^{k \times n}} \sum_{j = 1}^d \| Q^{(j)} V_{;,j} - S'' D_{W_{;, j}} A_{;,j} \|^2 + \lambda \| V_{;, j} \|^2$$ to get $$\tilde{U}_{i,;} = A_{i,;} D_{W_{i,;}} S' (P^{(i)})^T (P^{(i)} (P^{(i)})^T + \lambda I_k)^{-1}$$ and $$\tilde{V}_{;,j} = ((Q^{(j)})^T Q^{(j)} + \lambda I_k)^{-1} (Q^{(j)})^T S'' D_{W_{;,j}} A_{;,j}$$ so $\tilde{U}$ and $\tilde{V}$ can be encoded as rational functions over $\Theta(\frac{(s+\log(1 / \eps))kr}{\eps} )$ variables by Cramer's Rule.

By Theorem~\ref{thm:sketch_left}, we can argue that $\min_{\tilde{U},\tilde{V}} \| W \circ (\tilde{U} \tilde{V} - A) \|_F^2 + \lambda \| \tilde{U} \|_F^2 + \lambda \| \tilde{V} \|_F^2 $ is a good approximation for $ \| W \circ (U^* V^* - A) \|_F^2 + \lambda \| U^* \|_F^2 + \lambda \| V^* \|_F^2 $ with constant probability, and so in particular such a good approximation exists. By using the polynomial system feasibility checkers described in Theorem~\ref{thm:polysolver} and following similar procedures and doing binary search, we get an polynomial system with $O(nk)$-degree and $O(\frac{s+\log(1/\eps)}{\eps} kr)$ variables after simplifying and so our polynomial solver runs in time $n^{O((s+\log(1/\eps))kr/\eps)}\log^{O(1)}(\Delta/\delta ).$

\begin{theorem}
\label{thm:relative_error}
Given matrices $A, W \in \R^{n\times d}$ and $\eps < 0.1$ such that \begin{enumerate}
    \item rank(W) = r
    \item non-zero entries of $A, W$ are multiples of $\delta > 0$
    \item all entries of $A, W$ are at most $\Delta$ in absolute value
    \item $s  = \max_{i, j} \{ {\texttt{sd}}_\lambda (V^* D_{W_{i,;}}), {\texttt{sd}}_\lambda (D_{W_{;,j}} U^* )\} < k$
\end{enumerate} there is an algorithm to find $U \in \R^{n \times k}, V \in \R^{k \times d}$ in time $n^{O((s+\log(1/\eps))kr/\eps)}\log^{O(1)}(\Delta/\delta )$ such that $ \| W \circ (UV - A) \|_F^2 + \lambda \| U \|_F^2 + \lambda \| V \|_F^2 \leq (1+\eps)\OPT. $
\end{theorem}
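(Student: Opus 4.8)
The plan is to recast the regularized objective $f(U,V):=\|W\circ(UV-A)\|_F^2+\lambda\|U\|_F^2+\lambda\|V\|_F^2$ as a sum of independent ridge-regression problems, invoke Theorem~\ref{thm:sketch_left} to replace each by a \emph{sketched} problem whose solution has a closed form with few free parameters, and then search over those parameters with the solver of Theorem~\ref{thm:polysolver}. Concretely I would use the two identities
\[
\|W\circ(UV-A)\|_F^2=\sum_{i=1}^n\|U_{i,;}VD_{W_{i,;}}-A_{i,;}D_{W_{i,;}}\|^2=\sum_{j=1}^d\|D_{W_{;,j}}UV_{;,j}-D_{W_{;,j}}A_{;,j}\|^2 ,
\]
so that, holding $V$ fixed, recovering the rows of $U$ is a sum of $n$ ridge regressions with design matrices $(VD_{W_{i,;}})^T$, and symmetrically $V$ is recovered from regressions with design matrices $D_{W_{;,j}}U$. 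Hypothesis~4 says precisely that, at the optimum, each such matrix has statistical dimension at most $s$, which lets Theorem~\ref{thm:sketch_left} apply with a sketch of only $\Theta(\tfrac1\eps(s+\log(1/\eps)))$ rows.

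Applying Theorem~\ref{thm:sketch_left} to the row family with $V=V^*$ and sketch $S'$ yields $\tilde U$ with $f(\tilde U,V^*)\le(1+\eps)\OPT$, and the sketched solution has the closed form $\tilde U_{i,;}=A_{i,;}D_{W_{i,;}}S'(P^{(i)})^T(P^{(i)}(P^{(i)})^T+\lambda I_k)^{-1}$ with $P^{(i)}=V^*D_{W_{i,;}}S'$; symmetrically for $\tilde V_{;,j}$ in terms of $Q^{(j)}=S''D_{W_{;,j}}U^*$. The crucial variable-reduction step uses $\mathrm{rank}(W)=r$: the rows $W_{i,;}$ span an $r$-dimensional space, so fixing a basis $w_1,\dots,w_r$ we can write $P^{(i)}=\sum_{\ell=1}^r c_{i\ell}\,(V^*D_{w_\ell}S')$ with \emph{known} scalars $c_{i\ell}$. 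Hence all $P^{(i)}$ are determined by the $r$ matrices $R_\ell:=V^*D_{w_\ell}S'\in\R^{k\times\Theta((s+\log(1/\eps))/\eps)}$, i.e.\ by $O(\tfrac{s+\log(1/\eps)}{\eps}kr)$ scalars, and likewise for the $Q^{(j)}$.

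I would then promote the $R_\ell$ (and their column-side analogues) to free variables. By Cramer's rule on the $k\times k$ inverse, the closed forms make $\tilde U$ and $\tilde V$ rational functions of these $O(\tfrac{s+\log(1/\eps)}{\eps}kr)$ variables, so the constraint $f(\tilde U,\tilde V)\le\tau$, after clearing the determinants in the denominators, becomes a polynomial system of degree $O(nk)$. Feeding this to Theorem~\ref{thm:polysolver} and binary searching on $\tau$ gives running time $n^{O((s+\log(1/\eps))kr/\eps)}$; hypotheses~2 and~3 bound the coefficient bit-sizes and the range of $\tau$, contributing the $\log^{O(1)}(\Delta/\delta)$ factor. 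Correctness then reduces to exhibiting \emph{one} feasible point of value below $(1+O(\eps))\OPT$, which the sketching bounds supply with constant probability.

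The step I expect to be the main obstacle is exactly the coupling of the two sketches. The row sketch certifies the pair $(\tilde U,V^*)$ and the column sketch certifies $(U^*,\tilde V)$, but the algorithm outputs $(\tilde U,\tilde V)$, and neither guarantee directly controls the cross term in $\|W\circ(\tilde U\tilde V-A)\|_F^2$. The natural fix is to chain the sketches: take $\tilde U$ from the $V^*$-sketch and then let $\tilde V$ solve the \emph{sketched} column problem for that $\tilde U$, giving $f(\tilde U,\tilde V)\le(1+\eps)\min_V f(\tilde U,V)\le(1+\eps)f(\tilde U,V^*)\le(1+\eps)^2\OPT$. The delicate point is that Theorem~\ref{thm:sketch_left} is then invoked with design matrices $D_{W_{;,j}}\tilde U$ rather than $D_{W_{;,j}}U^*$, so one must show the statistical dimension is not inflated for the sub-optimal factor. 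I would handle this by exploiting that the $+\lambda\|\cdot\|_F^2$ terms make $f(\cdot,V^*)$ and $f(U^*,\cdot)$ $2\lambda$-strongly convex, forcing $\|\tilde U-U^*\|_F^2,\|\tilde V-V^*\|_F^2=O(\eps\OPT/\lambda)$; this simultaneously keeps the spectrum (hence statistical dimension) of the sketched factors close to that of the optimum and lets a short perturbation estimate transfer the bound to the joint objective.
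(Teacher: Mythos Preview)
Your skeleton is exactly the paper's: decompose the objective into row/column ridge regressions, sketch each side via Theorem~\ref{thm:sketch_left}, parametrize all $P^{(i)},Q^{(j)}$ by $r$ basis matrices using $\mathrm{rank}(W)=r$ (so $O((s+\log(1/\eps))kr/\eps)$ free scalars), express $\tilde U,\tilde V$ as rational functions via Cramer's rule, and hand the resulting $O(nk)$-degree system to Theorem~\ref{thm:polysolver} with binary search on the objective value. The variable and degree counts you give match the paper's.

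On the coupling you flag: the paper does not address it at all---it simply asserts in one sentence that ``by Theorem~\ref{thm:sketch_left}'' the parametrized minimum $\min_{\tilde U,\tilde V}f(\tilde U,\tilde V)$ is within $(1+\eps)\OPT$, without writing out a chain or any cross-term bound. Your chain argument (set $P$ from $V^*$, obtain $\tilde U$, then set $Q$ from that $\tilde U$ and invoke Theorem~\ref{thm:sketch_left} again for the columns) is the natural way to substantiate that assertion, and it is consistent with how the paper's own proof of Theorem~\ref{thm:remove_rank} treats $Q^{(j)}=S''D_{W_{;,j}}U$ for an arbitrary $U$. So you have gone further than the paper, not less far.

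Where your fix is weaker than you suggest is the perturbation step. Strong convexity does give $\|\tilde U-U^*\|_F^2\le \eps\,\OPT/\lambda$, but this does \emph{not} by itself force ${\texttt{sd}}_\lambda(D_{W_{;,j}}\tilde U)=O(s)$: a Frobenius-norm-$\delta$ perturbation can shift each of the $k$ singular values by up to $\delta$, so the statistical dimension can grow by as much as $\Theta(k\delta^2/\lambda)$, and here $\delta^2\approx\|W\|_\infty^2\,\eps\,\OPT/\lambda$ carries no a priori bound in terms of $s$. Thus the second invocation of Theorem~\ref{thm:sketch_left} is not justified by the strong-convexity estimate alone. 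Both the paper's one-line claim and your proposed patch leave this point at the level of a plausibility argument; closing it would need either an additional hypothesis relating $\OPT,\lambda,\|W\|_\infty,k$, or a different route that avoids controlling ${\texttt{sd}}_\lambda(D_{W_{;,j}}\tilde U)$ directly.
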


\begin{algorithm}[!t]\caption{Regularized Weighted Low Rank Approximation}\label{alg:main}
{
\begin{algorithmic}
\State
  \State {\bf public : procedure }\textsc{RegWeightedLowRank}{$(A,W,\lambda,s,k,\epsilon)$}
  \State \hspace{4mm} Sample Gaussian sketch $S'\in\R^{d \times \Theta(\frac{s}{\eps}) \log(1 / \eps)}$ from $\mathcal{D}$
  \State \hspace{4mm} Sample Gaussian sketch $S'' \in \R^{ \Theta(\frac{s}{\eps}) \log(1 / \eps) \times n}$ from $\mathcal{D}$.
  \State \hspace{4mm} Create matrix variables $P^{(i)} \in  \R^{k \times \Theta(\frac{s}{\eps}) \log(1 / \eps)}, Q^{(j)} \in \R^{k \times \Theta(\frac{s}{\eps}) \log(1 / \eps)}$ for $i, j$ from $1$ to $r$
  \State \Comment{Variables used in polynomial system solver}
  \State
  \State \hspace{4mm} Use Cramer's Rule to express $\tilde{U}_{i,;} = A_{i,;} D_{W_{i,;}} S' (P^{(i)})^T (P^{(i)} (P^{(i)})^T + \lambda I_k)^{-1}$ as a rational function of variables $P^{(i)}$; similarly, $\tilde{V}_{;,j} = ((Q^{(j)})^T Q^{(j)} + \lambda I_k)^{-1} (Q^{(j)})^T S'' D_{W_{;,j}} A_{;,j}$
  \State \Comment{$\tilde{U},\tilde{V}$ are now rational function of variables $P,Q$}
  \State
  \State \hspace{4mm} Solve $\min_{\tilde{U},\tilde{V}} \| W \circ (\tilde{U} \tilde{V} - A) \|_F^2 + \lambda \| \tilde{U} \|_F^2 + \lambda \| \tilde{V} \|_F^2 $ and apply binary search to find $\tilde{U},\tilde{V}$
  \State \Comment{Optimization with polynomial solver of Theorem~\ref{thm:polysolver} in variables $P,Q$}
  \State \Return $\tilde{U},\tilde{V}$
\end{algorithmic}}
\end{algorithm}

\subsection{Removing Rank Dependence}

Note that the running time of our algorithm still depends on $k$, the dimension that we are reducing to. To remove this, we prove a structural theorem about low rank approximation of low statistical dimension matrices.

\begin{theorem}
\label{thm:remove_rank}
Given matrices $A, W$ in $\R^{n\times d}$ and $\eps < 0.1$ such that rank($W$) is $r$, and letting $s$ equal $\max_{i, j} \{ {\texttt{sd}}_\lambda (V^* D_{W_{i,;}}), {\texttt{sd}}_\lambda (D_{W_{;,j}} U^* )\} < k,$ if we let \OPT(k) denote \[\min_{U \in \R^{n \times k}, \\ V \in \R^{k \times d}} \| W \circ (UV - A) \|_F^2 + \lambda \| U \|_F^2 + \lambda \| V \|_F^2
\] then $\OPT(O(r(s+\log(1/\eps))/\eps)) \leq (1+\eps)\OPT(k)$
\end{theorem}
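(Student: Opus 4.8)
The plan is to prove the theorem by exhibiting explicit rank-$k'$ factors, with $k' = O(r(s+\log(1/\eps))/\eps)$, whose objective value is within $(1+\eps)$ of $\OPT(k)$; since $\OPT(k')$ is a minimum over all rank-$k'$ factors, any such witness suffices. I start from the optimal factors $U^*, V^*$ for $\OPT(k)$ and look for a $k'$-dimensional subspace of the inner space $\R^k$, represented by a matrix $P_1 \in \R^{k \times k'}$ with orthonormal columns. Setting $\tilde U = U^* P_1 \in \R^{n \times k'}$ and $\tilde V = P_1^T V^* \in \R^{k' \times d}$ gives $\tilde U \tilde V = U^* P_1 P_1^T V^* = U^* P V^*$, where $P = P_1 P_1^T$ is the orthogonal projection onto the chosen subspace. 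The advantage of projecting (rather than an arbitrary sketch) is that the regularization is automatically controlled: since $P_1$ has orthonormal columns and $P \preceq I$, we get $\|\tilde U\|_F^2 = \|U^* P_1\|_F^2 \le \|U^*\|_F^2$ and $\|\tilde V\|_F^2 \le \|V^*\|_F^2$, so both regularization terms only decrease. Hence it suffices to control the fitting term, and writing $U^* P V^* = U^*V^* - U^*(I-P)V^*$ and using the triangle inequality, the task reduces to showing that the projected-out contribution $\|W\circ(U^*(I-P)V^*)\|_F^2$ is at most $O(\eps)\cdot\OPT(k)$ for a good choice of $P$.

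The structural observation that brings in both $s$ and $r$ is the following. In row form, $\|W\circ(U^*(I-P)V^*)\|_F^2 = \sum_{i=1}^n \|U^*_{i,;}(I-P)\,V^* D_{W_{i,;}}\|^2$, and each matrix $V^* D_{W_{i,;}}$ has statistical dimension at most $s$, so only an $O(s)$-dimensional band of directions in $\R^k$ carries essentially all of its weighted energy relative to $\lambda$; the complementary directions have $\sigma^2 < \lambda$ and, exactly as in the $M_t$ estimate of Claim~\ref{claim:condition_number}, their contribution to the fit is dominated by the regularization energy they carry and hence by $\OPT(k)$. Crucially, since $\mathrm{rank}(W) = r$, the rows $W_{i,;}$ lie in a fixed $r$-dimensional subspace; fixing a basis $w^{(1)},\dots,w^{(r)}$ of the row space of $W$ and writing $W_{i,;} = \sum_{t=1}^r c_{it} w^{(t)}$, the diagonal factorization gives $V^* D_{W_{i,;}} = \sum_{t=1}^r c_{it}\, V^* D_{w^{(t)}}$. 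Thus the entire family $\{V^* D_{W_{i,;}}\}_{i=1}^n$ lies in the $r$-dimensional span of the fixed matrices $\{V^* D_{w^{(t)}}\}_{t=1}^r$, and symmetrically for $\{D_{W_{;,j}} U^*\}_{j=1}^d$. I would therefore take $P$ to capture the significant directions of this $r$-fold family simultaneously, whose total dimension is $O(rs)$, with the extra $1/\eps$ and $\log(1/\eps)$ factors arising from the target accuracy and the failure probability.

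To make the choice effective and quantify the error I would realize $P$ through a random Gaussian sketch of the inner space and reuse the machinery already developed: Lemma~\ref{lem:AMM} controls the distortion on each base matrix in terms of its stable/statistical dimension, and the conditioning argument of Theorem~\ref{thm:sketch_left}, applied now ``on the right'' to the solution variable rather than ``on the left'' to the sample dimension, upgrades a per-matrix guarantee into an in-expectation $(1+\eps)$ bound. Because the whole family collapses to $r$ base matrices, a single sketch with $O(r(s+\log(1/\eps))/\eps)$ columns preserves every row- and column-regression with constant probability, so only $\log(1/\eps)$ — never $\log(nd)$ — enters the dimension. Once $P$ is fixed, $\tilde U = U^* P_1$ and $\tilde V = P_1^T V^*$ are rank-$k'$ factors with fitting error at most $(1+O(\eps))\OPT(k)$ and regularization no larger than that of $U^*,V^*$, which yields $\OPT(k') \le (1+\eps)\OPT(k)$ after rescaling $\eps$.

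The step I expect to be the main obstacle is proving simultaneous preservation for all $n$ row-regressions and all $d$ column-regressions with a single projection of size only $O(r(s+\log(1/\eps))/\eps)$, avoiding the naive $\log(nd)$ union-bound blowup; this is exactly where the rank-$r$ collapse to $r$ fixed base matrices must be combined with the careful conditioning of Theorem~\ref{thm:sketch_left}. A secondary but genuine subtlety is that the statistical dimension of each base matrix $V^* D_{w^{(t)}}$ need \emph{not} itself be bounded by $s$ — only the combinations $V^* D_{W_{i,;}}$ are assumed to have statistical dimension at most $s$ — so the argument must preserve the actual weighted regressions rather than the base matrices verbatim, for instance by projecting onto the span of the significant directions of the combinations directly while charging the discarded energy against the regularization part of $\OPT(k)$.
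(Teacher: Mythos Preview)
Your plan and the paper's diverge at the very first move: you project the \emph{inner} space $\R^k$ (replacing $U^*,V^*$ by $U^*P_1,\,P_1^TV^*$), whereas the paper projects the \emph{outer} space $\R^n$. Concretely, the paper reuses the sketch $S''\in\R^{\ell\times n}$, $\ell=O((s+\log(1/\eps))/\eps)$, from Theorem~\ref{thm:relative_error} and observes that every matrix $R^{(j)}=S'' D_{W_{;,j}}$ is a linear combination, via the rank-$r$ structure of $W$, of $r$ fixed $\ell\times n$ matrices; hence all the $R^{(j)}$ share a common row space in $\R^n$ of dimension at most $r\ell$. Letting $P$ be the orthogonal projector onto that common row space gives $R^{(j)}P=R^{(j)}$ \emph{exactly} for every $j$, so the sketched regression that defines $\tilde V$ is literally unchanged when $U^*$ is replaced by $PU^*$. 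The pair $(PU^*,\tilde V)$ then inherits the $(1+\eps)$ guarantee already supplied by Theorem~\ref{thm:sketch_left}, and since $\mathrm{rank}(PU^*)\le r\ell$, the product has the required rank. Note also that the paper's $\tilde V$ is a sketched best response, not $P_1^TV^*$.

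This outer-space route sidesteps exactly the obstacles you flag, and those obstacles are genuine gaps in your plan. The step that does not go through is the claim that the significant left-singular directions of all the $V^*D_{W_{i,;}}$ together span only $O(rs)$ dimensions: the rank-$r$ collapse places the matrices $V^*D_{W_{i,;}}$ in an $r$-dimensional \emph{linear} span of matrices, but top singular subspaces do not combine linearly, and (as you yourself note) the base matrices $V^*D_{w^{(t)}}$ need not have statistical dimension $\le s$. Your fallback --- a random Gaussian sketch of $\R^k$ with ``Theorem~\ref{thm:sketch_left} applied on the right'' --- is not a reuse of that theorem: Theorem~\ref{thm:sketch_left} sketches the sample dimension and compares the sketched minimizer to the true one, whereas sketching the parameter dimension \emph{restricts} the feasible set and needs a different argument you do not supply. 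And even if one granted a per-$i$ bound $\|(I-P)V^*D_{W_{i,;}}\|^2\le\lambda$, your triangle-inequality decomposition yields only $\|W\circ(U^*(I-P)V^*)\|_F^2\le\lambda\|U^*\|_F^2\le\OPT(k)$, a constant-factor rather than a $(1+\eps)$ bound. The paper avoids all of this because in $\R^n$ the sketch $S''$ has already compressed the relevant dimension to $\ell$, and the rank-$r$ structure then acts exactly (via $R^{(j)}P=R^{(j)}$) rather than approximately.
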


\begin{proof}
Observe that our algorithm in Theorem~\ref{thm:relative_error}, $$\tilde{V} = \argmin_{V \in \R^{k \times n}} \sum_{j = 1}^d \| Q^{(j)} V_{;,j} - S'' D_{W_{;, j}} A_{;,j} \|^2 + \lambda \| V_{;, j} \|^2$$ where $Q^{(j)} = S'' D_{W_{;,j}} U$ which equals $R^{(j)}U$ which is a $O((s+\log(1/\eps))/\eps)$ by $k$ matrix and $R^{(j)} = S'' D_{W_{;,j}}$. Note that $R^{(j)}$ can be written as a linear combination of $r$ matrices with rank at most $O((s+\log(1/\eps))/\eps)$. Therefore, by letting $P$ be the projection matrix on the span of the total $$O((s+\log(1/\eps))r/\eps)$$ right singular vectors of these $r$ matrices, we see that $\tilde{V}$ equals $$\argmin_{V \in \R^{k \times n}} \sum_{j = 1}^d \| R^{(j)}P U V_{;,j} - S'' D_{W_{;, j}} A_{;,j} \|^2 + \lambda \| V_{;, j} \|^2$$

Since this holds for any $U$, we see that \begin{align*}
   \| W &\circ (PU^* \tilde{V} - A) \|_F^2 + \lambda \| PU^* \|_F^2 + \lambda \| \tilde{V} \|_F^2\\ &\leq \| W \circ (U^* \tilde{V} - A) \|_F^2 + \lambda \| U^* \|_F^2 + \lambda \| \tilde{V} \|_F^2
\end{align*}
Since $PU^*\tilde{V}$ has rank at most the rank of $P$, we conclude by noting that by the guarantees of Theorem~\ref{thm:relative_error}, $$\| W \circ (U^* \tilde{V} - A) \|_F^2 + \lambda \| U^* \|_F^2 + \lambda \| \tilde{V} \|_F^2  \leq (1+\eps)\OPT(k)$$
\end{proof}

 Combining Theorem~\ref{thm:relative_error} and Theorem~\ref{thm:remove_rank}, we have our final theorem. We note that this also improves running time bounds of un-weighted regularized low rank approximation in Section 3 of \cite{ACW17}.

 \begin{theorem} \label{thm:no_rank_dependence}
 Given matrices $A, W \in \R^{n\times d}$ and $\eps < 0.1$ and the conditions of Theorem~\ref{thm:relative_error}, there is an algorithm to find $U \in \R^{n \times k}, V \in \R^{k \times d}$ in time $n^{O(r^2(s+\log(1/\eps))^2/\eps^2)}\log^{O(1)}(\Delta/\delta )$ such that $ \| W \circ (UV - A) \|_F^2 + \lambda \| U \|_F^2 + \lambda \| V \|_F^2 \leq (1+\eps)\OPT.$

 \end{theorem}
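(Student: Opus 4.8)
The goal is to combine the two structural results already established—Theorem~\ref{thm:relative_error} (an $n^{O((s+\log(1/\eps))kr/\eps)}$ algorithm achieving $(1+\eps)\OPT$ for a target rank $k$) and Theorem~\ref{thm:remove_rank} (which says that shrinking the target rank to $k' := O(r(s+\log(1/\eps))/\eps)$ costs only a $(1+\eps)$ factor in the optimum)—into a single algorithm whose exponent no longer depends on $k$. The idea is simply to run the algorithm of Theorem~\ref{thm:relative_error} but with the target rank set to $k'$ rather than $k$.

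First I would invoke Theorem~\ref{thm:remove_rank} to replace the original rank-$k$ problem with the rank-$k'$ problem, where $k' = O(r(s+\log(1/\eps))/\eps)$. This guarantees $\OPT(k') \le (1+\eps)\OPT(k) = (1+\eps)\OPT$, so it suffices to solve the rank-$k'$ instance to relative error $(1+\eps)$. Next I would apply Theorem~\ref{thm:relative_error} to this rank-$k'$ instance. One must check that the hypotheses still hold: the rank of $W$ is unchanged ($=r$), the entries of $A,W$ are the same (so the $\delta$ and $\Delta$ conditions survive), and the relevant statistical dimension $s$ is unchanged because $s$ is a quantity depending only on $A$, $W$, and $\lambda$ through the optimal factors, and is bounded above by the smaller target rank. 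The algorithm of Theorem~\ref{thm:relative_error} then runs in time $n^{O((s+\log(1/\eps))\,k'\,r/\eps)}\log^{O(1)}(\Delta/\delta)$ and returns factors achieving $(1+\eps)\OPT(k')$.

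Finally I would substitute $k' = O(r(s+\log(1/\eps))/\eps)$ into the exponent: the product $(s+\log(1/\eps))\cdot k' \cdot r/\eps$ becomes $(s+\log(1/\eps)) \cdot r(s+\log(1/\eps))/\eps \cdot r/\eps = r^2 (s+\log(1/\eps))^2/\eps^2$, giving the claimed running time $n^{O(r^2(s+\log(1/\eps))^2/\eps^2)}\log^{O(1)}(\Delta/\delta)$. The two $(1+\eps)$ factors compose to $(1+\eps)^2 = 1+O(\eps)$, which (after rescaling $\eps$ by a constant, harmless since $\eps<0.1$) yields the desired $(1+\eps)\OPT$ guarantee; the returned $U \in \R^{n\times k'}$ can be padded with zero columns to lie in $\R^{n\times k}$ without changing the objective.

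The main subtlety I anticipate is the bookkeeping around $s$ and the statistical-dimension hypothesis when the target rank is lowered to $k'$. Theorem~\ref{thm:relative_error} requires $s < k$, and here we need $s < k'$; since $k' = O(r(s+\log(1/\eps))/\eps) \gg s$ for $r\geq 1$ and $\eps<0.1$, this condition is automatically satisfied, but one should state it carefully because $s$ is defined via the optimal factors $U^*,V^*$ of the \emph{original} problem, and it must be confirmed that the same bound $s$ controls the sketching dimension used inside the rank-$k'$ run. The rest is a direct composition of the two theorems with a routine $\eps$-rescaling.
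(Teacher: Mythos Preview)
Your proposal is correct and takes essentially the same approach as the paper, which simply states that the result follows by combining Theorem~\ref{thm:relative_error} and Theorem~\ref{thm:remove_rank}. Your write-up is in fact more careful than the paper's one-line justification: you explicitly substitute $k' = O(r(s+\log(1/\eps))/\eps)$ into the exponent, handle the $(1+\eps)^2$ rescaling and zero-padding, and flag the bookkeeping issue that $s$ is defined via the optimal factors of the original rank-$k$ problem rather than the rank-$k'$ one (a point the paper glosses over).
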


\section{Reducing the Degree of the Solver}
\subsection{Non-negative Weight Matrix and Non-Negative Rank}

Under the case where $W$ is rank $r$ with only $r$ distinct columns (up to scaling), we are able to improve the running time to $poly(n)2^{r^3(s+\log(1/\eps))^2/\eps^2}$ by showing that the degree of the solver is $O(rk)$ as opposed to $O(nk)$. Specifically, the $O(nk)$ degree comes from clearing the denominator of the rational expressions that come from na\"ively using and analyzing Cramer's Rule; in this section, we demonstrate different techniques to avoid the dependence on $n$. We also show the same running time bound under a more relaxed assumption of non-negative rank, which is always less than or equal to the number of distinct columns.

\begin{theorem} \label{thm:nonneg}
     Given matrices $A, W \in \R^{n\times d}$ and $\eps < 0.1$ and suppose the conditions of Theorem~\ref{thm:relative_error} hold. Furthermore, we are given $Y, Z \geq 0$ such that $W = YZ$ and $Y, Z^T$ has ${\texttt{nnr}}(W) = r'$ columns.

     Then, there is an algorithm to find $U \in \R^{n \times k}$, $V \in \R^{k \times d}$ in time $poly(n) \cdot 2^{O(r' r^2(s+\log(\frac{1}{\eps}))^2 \frac{1}{\eps^2})} \cdot \log^{O(1)} \left(\frac{\Delta}{\delta} \right)$ such that $\| W \circ (UV - A) \|_F^2 + \lambda \| U \|_F^2 + \lambda \| V \|_F^2 \leq (1+\eps)\OPT.$
\end{theorem}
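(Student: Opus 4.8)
Theorem 4 (the non-negative rank version) wants to show that when $W = YZ$ with $Y, Z \geq 0$ and $\mathrm{nnr}(W) = r'$, the running time becomes $\mathrm{poly}(n) \cdot 2^{O(r' r^2 (s+\log(1/\eps))^2/\eps^2)}$. Comparing to Theorem 3 (which gives $n^{O(r^2(s+\log(1/\eps))^2/\eps^2)}$), the improvement is removing $n$ from the base, replacing it with $2$. The exponent gains a factor of $r'$.

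**Where the $n$ in the base comes from.** The paper is explicit: the $n^{(\cdots)}$ running time comes from the polynomial system solver of Theorem 1, whose complexity is $(md)^{O(v)}$ where $d$ is the max degree and $v$ is the number of variables. From the algorithm in Theorem 2, $\tilde U_{i,:}$ and $\tilde V_{:,j}$ are rational functions (via Cramer's rule) whose denominators like $\det(P^{(i)}(P^{(i)})^T + \lambda I_k)$ must be cleared to form polynomial constraints. Clearing denominators across all $n$ rows (and $d$ columns) produces a polynomial system of degree $O(nk)$. Since the solver runs in $(md)^{O(v)}$ time and $d = O(nk)$, we get $n^{O(v)}$. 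The stated path to improvement is to reduce the degree from $O(nk)$ to $O(r'k)$ (or $O(rk)$), after which $(md)^{O(v)} = \mathrm{poly}(n) \cdot 2^{O(v)}$ with $v = O(r^2(s+\log(1/\eps))^2/\eps^2)$-ish, times the $r'$ factor.

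**The plan.**

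First I would set up the objective using the non-negative factorization $W = YZ$. The weighted norm $\|W \circ (UV - A)\|_F^2$ decomposes into a sum over columns $j$ of $\|D_{W_{:,j}}(UV_{:,j} - A_{:,j})\|^2$. Using $W = YZ$, each column $W_{:,j} = Y Z_{:,j}$ is a non-negative combination of the $r'$ columns of $Y$. The key structural point is that there are effectively only $r'$ distinct "weight profiles" up to non-negative scaling governing the sketching matrices $R^{(j)} = S'' D_{W_{:,j}}$. I would exploit that $D_{W_{:,j}} = \sum_{\ell=1}^{r'} Z_{\ell,j} D_{Y_{:,\ell}}$, so the diagonal weight for each column lies in the $r'$-dimensional cone spanned by the columns of $Y$.

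Second, the heart of the argument is a rounding/bucketing procedure (the "novel rounding procedure" promised in the introduction) that limits the number of distinct denominator polynomials. Since $Q^{(j)} = S'' D_{W_{:,j}} U$ and $D_{W_{:,j}}$ ranges over a cone of dimension $r'$, I would discretize the relevant scaling coefficients on a fine geometric grid (to within a $(1\pm\mathrm{poly}(\eps))$ factor, absorbing the error into the $(1+\eps)$ guarantee). After rounding, the matrices $Q^{(j)}$ — and hence the denominators $\det((Q^{(j)})^T Q^{(j)} + \lambda I_k)$ — take only a bounded number of distinct forms controlled by $r'$ and the number of grid points, rather than depending on all $d$ columns independently. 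This is what collapses the cleared-denominator degree from $O(nk)$ down to roughly $O(r'k)$: one clears a denominator per distinct bucket, not per row/column. I would need to verify that the rounded weights still yield a $(1+\eps)$-approximate optimum, appealing to the stability of the regularized regression solution (small multiplicative perturbations of $D_{W_{:,j}}$ perturb the optimal $\tilde V_{:,j}$ controllably because of the $+\lambda I$ regularizer bounding the condition number).

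Third, having reduced the degree to $O(r'k)$ — or more precisely produced a system with $m = \mathrm{poly}(n)$ constraints of degree $O(r'k)$ and $v = O(r^2(s+\log(1/\eps))^2/\eps^2)$ variables — I would invoke Theorem 1. Then $(md)^{O(v)} = (\mathrm{poly}(n) \cdot r'k)^{O(v)} = \mathrm{poly}(n) \cdot 2^{O(v \log(r'k))}$. The extra factor of $r'$ in the exponent of the final bound $2^{O(r' r^2(s+\log(1/\eps))^2/\eps^2)}$ should come from the bucketing overhead: each of the $r'$ cone directions contributes its own set of constraints, multiplying the effective variable count. I would combine this with the rank-reduction of Theorem 3 (which already certifies that taking $k = O(r(s+\log(1/\eps))/\eps)$ loses only a $(1+\eps)$ factor), so that the variable count $v$ is genuinely $O(r^2(s+\log(1/\eps))^2/\eps^2)$ and independent of the original $k$.

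**Main obstacle.** The hardest part will be the rounding argument in the second step: showing that discretizing the non-negative scalings $Z_{\ell,j}$ onto a grid both (a) keeps the number of distinct denominator polynomials small enough to give degree $O(r'k)$ rather than $O(nk)$, and (b) only perturbs $\OPT$ by a $(1+\eps)$ factor. Part (b) requires a careful perturbation bound on the regularized pseudoinverse $((Q^{(j)})^T Q^{(j)} + \lambda I)^{-1}(Q^{(j)})^T$ under multiplicative changes to the diagonal weights, where the $\lambda$-regularization is essential to control the sensitivity; and one must ensure the grid is fine enough (hence $\log^{O(1)}(\Delta/\delta)$ bit complexity, matching the stated running time) without blowing up the number of buckets beyond what the $r'$ factor in the exponent can absorb.
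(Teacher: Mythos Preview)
Your high-level approach matches the paper's: round the entries of the non-negative factors $Y$ and $Z$ to powers of $(1+\eps)$, so that the resulting $W' = Y'Z'$ has only $(\log(n)/\eps)^{r'}$ distinct rows and columns, and then invoke the reduced-degree polynomial system argument. So the core idea is right.

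However, your error analysis is over-engineered and misidentifies what makes the argument work. You propose to control the effect of rounding via a perturbation bound on the regularized pseudoinverse $((Q^{(j)})^T Q^{(j)} + \lambda I)^{-1}(Q^{(j)})^T$, crediting the $+\lambda I$ term for bounding sensitivity. The paper does something much simpler and more robust: because $Y,Z \geq 0$, rounding each entry to within a $(1\pm\eps)$ factor yields $W' = Y'Z'$ satisfying the \emph{entrywise} bound $(1-\eps)^2 W \leq W' \leq (1+\eps)^2 W$ (cancellation cannot occur in a non-negative sum). From this, $\|W' \circ (UV-A)\|_F^2$ and $\|W \circ (UV-A)\|_F^2$ differ by at most a $(1+O(\eps))$ factor for \emph{every} $U,V$, so solving the problem for $W'$ immediately gives a $(1+O(\eps))$-approximation for $W$. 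No pseudoinverse perturbation, and no use of $\lambda$ at all in this step --- the entire reduction is a consequence of non-negativity, which is precisely why the hypothesis is stated in terms of $\texttt{nnr}(W)$ rather than $\mathrm{rank}(W)$. Your ``main obstacle'' is therefore not an obstacle.

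Two smaller corrections. First, you need to round both $Y$ and $Z$ (you only mention the column coefficients $Z_{\ell,j}$); otherwise the rows of $W'$ are still all distinct and the denominators for $\tilde U_{i,:}$ still contribute degree $O(nk)$. Second, the number of distinct rows/columns after rounding is $(\log(n)/\eps)^{r'}$, not $r'$, so the degree of the cleared system is $(\log(n)/\eps)^{r'} \cdot O(k)$ rather than $O(r'k)$; this is what ultimately produces the $r'$ factor in the exponent of the running time.
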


\begin{proof}
Since we know $W = YZ$, where $Y \in R^{n \times r'}$ and $Z \in \R^{r'\times n}$ are non-negative, then we claim that we have a rounding procedure to create $W'$ with only $(\log n/\eps)^{r'}$ distinct columns and rows that produces an $\eps$-close solution.
The procedure is as expected: round all values in $Y, Z$ to the nearest power of $(1+\eps)$ and call $W' = Y'Z'$ our new matrix. Note that there are at most $(\log n/\eps)^{r'}$ distinct rows, since each row in $A$ takes on only $(\log n/\eps)^{r'}$ possible values. Symmetrically, the number of columns is bounded by the same value. Now, we claim that: \begin{claim}
    $(1-\eps)^2 W \leq W' \leq (1+\eps)^2 W$
\end{claim}

\begin{proof}
It suffices to show that the intermediary matrix $\hat{W} = Y'Z$ satisfies this bound. Consider each row of $\hat{W}$, so WLOG, let $\hat{W}_1$ be the first row of $\hat{W}$ which can be expressed as $\hat{W}_1 = \sum_{i=1}^{r'}(Y')_{1i}Z_{i}$, where $Z_i$ is the i-th row of $Z$. Note that $W_1 = \sum_{i=1}^{r'} Y_{1i}Z_i$. Finally, since $(Y')_{1i} \in (1-\eps, 1+\eps) Y_{1i}$ by our rounding procedure and all values in $Y_{1i}, Z_i$ are non-negative, we deduce that $(1-\eps)\hat{W}_1 \leq W_1 \leq (1+\eps)\hat{W}_1$.
\end{proof}

Since we only have  $(\log n/\eps)^{r'}$ distinct columns and rows, when we call the polynomial solver, the degree of our system is at most $(\log n/\eps)^{r'}$ and our bound follows from polynomial system solver guarantees.
\end{proof}

\subsection{Richardson's Iteration}

Note that the current polynomial solver uses Cramer's rule to solve $$\tilde{U} = \argmin_{U \in \R^{n \times k}} \sum_{i = 1}^n \| U_{i,;} P^{(i)} - A_{i,;} D_{W_{i,;}} S' \|^2 + \lambda \| U_{i,;} \|^2 $$ giving $$\tilde{U}_{i,;} = A_{i,;} D_{W_{i,;}} S' (P^{(i)})^T (P^{(i)} (P^{(i)})^T + \lambda I_k)^{-1}.$$ We want to use Richardson's iteration instead to avoid rational expressions and the dependence on $n$ in the degree that comes from clearing the denominator.

\begin{theorem}[Preconditioned Richardson \cite{cohen2017almost}]
\label{thm:richardson}
Let $A, B$ be symmetric PSD matrices such that $ker(A) = ker(B)$ and $\eta A \preceq B \preceq A$. Then, for any $b$, if $x_0 = 0$ and $x_{i+1} = x_i - \eta B^{-1}(Ax_i - b)$, $$ \|x_t - A^{-1}b\| \leq \eps \|A^{-1}b\|$$ for $t = \Omega(\log(c_B/\eps)/\eta)$. Furthermore, we may express $x_t$ as a polynomial of degree $O(t)$ in terms of the entries of $B^{-1}$ and $ A$.
\end{theorem}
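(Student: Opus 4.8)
The plan is to analyze the iteration through its error vector and to reduce the non-symmetric operator $I - \eta B^{-1}A$ to a symmetric one by a similarity transform. Write $x^* = A^{-1}b$, interpreting $A^{-1}$ on its image (legitimate since $\ker(A) = \ker(B)$, and we may assume $b$ is orthogonal to $\ker A$), and define the error $e_i = x_i - x^*$. Substituting $b = Ax^*$ into the recursion gives $e_{i+1} = (I - \eta B^{-1}A)e_i$, and hence $e_t = (I - \eta B^{-1}A)^t e_0$ with $e_0 = -x^*$. The entire convergence question thus reduces to controlling the operator $(I - \eta B^{-1}A)^t$ on the relevant subspace.

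First I would symmetrize. Since $B^{-1}A$ is not symmetric, its powers are not governed by its spectral radius directly, so the fix is to conjugate by $B^{1/2}$. Setting $C = B^{-1/2}AB^{-1/2}$, a symmetric PSD matrix on $\mathrm{Im}(B)$, one has $I - \eta B^{-1}A = B^{-1/2}(I - \eta C)B^{1/2}$ and therefore $(I - \eta B^{-1}A)^t = B^{-1/2}(I - \eta C)^t B^{1/2}$. The two hypotheses translate cleanly under the substitution $x = B^{-1/2}y$: the bound $B \preceq A$ becomes $I \preceq C$ and $\eta A \preceq B$ becomes $C \preceq \eta^{-1} I$, so the eigenvalues of $C$ lie in $[1, \eta^{-1}]$ and those of $I - \eta C$ lie in $[0, 1-\eta]$. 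Consequently $\|(I - \eta C)^t\| \leq (1-\eta)^t$.

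Next I would transfer this bound back through the conjugation, which is exactly where $c_B$ enters. Bounding $\|e_t\| \leq \|B^{-1/2}\|\,\|(I-\eta C)^t\|\,\|B^{1/2}\|\,\|e_0\|$ and noting that $\|B^{-1/2}\|\,\|B^{1/2}\|$ is a fixed power of the condition number $c_B$ gives $\|e_t\| \leq \mathrm{poly}(c_B)\,(1-\eta)^t\,\|x^*\|$. Requiring the prefactor to drop below $\eps$ and using $-\log(1-\eta) \geq \eta$ then shows that $t = \Omega(\log(c_B/\eps)/\eta)$ iterations suffice, since the logarithm absorbs the polynomial $c_B$ factor; this yields $\|x_t - A^{-1}b\| \leq \eps\|A^{-1}b\|$.

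For the degree claim I would simply unroll the recursion: rewriting $x_{i+1} = (I - \eta B^{-1}A)x_i + \eta B^{-1}b$ and starting from $x_0 = 0$ gives the closed form $x_t = \eta \sum_{j=0}^{t-1}(I - \eta B^{-1}A)^j B^{-1}b$. Each step contributes one factor of $B^{-1}$ and one of $A$, so by induction the entries of $x_t$ are polynomials of total degree $O(t)$ in the entries of $B^{-1}$ and $A$. The main obstacle is the non-normality of $B^{-1}A$ flagged above: without the $B^{1/2}$-symmetrization one cannot convert the favorable spectrum of $C$ into an operator-norm bound on $(I - \eta B^{-1}A)^t$. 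A secondary but necessary piece of care is the shared kernel, which requires working throughout on $\mathrm{Im}(B) = \mathrm{Im}(A)$ and reading $B^{-1}$ and $A^{-1}$ as pseudoinverses there.
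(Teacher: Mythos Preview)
The paper does not prove this theorem; it is imported from \cite{cohen2017almost} and used as a black box, so there is no paper proof to compare against. Your argument is the standard and correct one: the $B^{1/2}$-conjugation reduces the non-symmetric iteration matrix $I-\eta B^{-1}A$ to the symmetric $I-\eta C$ with $C=B^{-1/2}AB^{-1/2}$; the Loewner hypotheses $\eta A\preceq B\preceq A$ translate exactly to $I\preceq C\preceq \eta^{-1}I$, giving $\|I-\eta C\|\leq 1-\eta$; and undoing the conjugation costs a factor $\|B^{-1/2}\|\,\|B^{1/2}\|=\sqrt{c_B}$, so $\|e_t\|\leq \sqrt{c_B}\,(1-\eta)^t\|x^*\|$ and $t=\Omega(\log(c_B/\eps)/\eta)$ suffices. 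The unrolled formula $x_t=\eta\sum_{j=0}^{t-1}(I-\eta B^{-1}A)^jB^{-1}b$ indeed has entries that are polynomials of degree $O(t)$ in the entries of $B^{-1}$ and $A$, which is all the paper needs downstream when it plugs this into the polynomial system solver.
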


\begin{theorem}
\label{thm:gradient_descent}
     Given matrices $A, W \in \R^{n\times d}$ and $\eps < 0.1$ and suppose the conditions of Theorem~\ref{thm:relative_error} hold. Furthermore, let $\sigma = \max_{i, j} \{ \sigma_1 (V^* D_{W_{i,;}}), \sigma_1 (D_{W_{;,j}} U^* )\}$.

There is an algorithm to find $U \in \R^{n \times k}$, $V \in \R^{k \times d}$ in time $poly(n) \left( \frac{\sigma^2}{\lambda} \cdot \log \left( \frac{\Delta(\sigma^2+\lambda)n}{\lambda \tau} \right) \right) ^{l} \cdot \log^{O(1)} \left( \frac{\Delta}{\delta} \right),$ where $l = O((s+\log( \frac{1}{\eps}))^2 \frac{r^2}{\eps^2})$, such that $ \| W \circ (UV - A) \|_F^2 + \lambda \| U \|_F^2 + \lambda \| V \|_F^2 \leq (1+\eps)\OPT + \tau.$
\end{theorem}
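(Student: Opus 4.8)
The plan is to leave the reduction of Theorems~\ref{thm:relative_error} and \ref{thm:no_rank_dependence} completely intact --- sketch on the right by $S'$ and on the left by $S''$, and reduce the number of polynomial-system variables to $l = O(r^2(s+\log(1/\eps))^2/\eps^2)$ via the rank-removal argument --- and only change how the inner minimizers $\tilde U$ and $\tilde V$ are represented inside the system. In the Cramer's-rule representation, $\tilde U_{i,;} = A_{i,;} D_{W_{i,;}} S' (P^{(i)})^T (P^{(i)}(P^{(i)})^T + \lambda I_k)^{-1}$ is a rational function whose denominator is $\det(P^{(i)}(P^{(i)})^T + \lambda I_k)$; clearing the $n$ distinct denominators to write the objective as one polynomial forces degree $\Theta(nk)$, and this is the sole source of the base $n$ in the exponent of Theorem~\ref{thm:no_rank_dependence}. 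First I would replace each matrix inversion by the preconditioned Richardson iteration of Theorem~\ref{thm:richardson}, which represents $\tilde U_{i,;}$ (and symmetrically $\tilde V_{;,j}$) as a genuine polynomial of controlled degree in the entries of $P^{(i)}$ (resp.\ $Q^{(j)}$), with no denominator to clear.

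Concretely, to compute $\tilde U_{i,;}$ I would run Richardson on the regularized normal system with coefficient matrix $A_{\mathrm{reg}} = P^{(i)}(P^{(i)})^T + \lambda I_k$ and preconditioner $B = \lambda I_k$. Both are positive definite, so $\ker(A_{\mathrm{reg}}) = \ker(B) = \{0\}$ and $B \preceq A_{\mathrm{reg}}$ since $P^{(i)}(P^{(i)})^T \succeq 0$. The key spectral estimate is an upper bound on $\lambda_{\max}(A_{\mathrm{reg}}) = \lambda + \sigma_{\max}(P^{(i)})^2$: at the near-optimal point $P^{(i)} = V^* D_{W_{i,;}} S'$ we have $\sigma_{\max}(V^* D_{W_{i,;}}) \le \sigma$ by hypothesis, and the upper-distortion side of Lemma~\ref{lem:AMM} (giving $K_{S'} = O(1)$ with good probability) yields $\sigma_{\max}(P^{(i)})^2 = O(\sigma^2)$. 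Hence $\eta A_{\mathrm{reg}} \preceq B$ holds with $\eta = \Omega(\lambda/(\lambda+\sigma^2)) = \Omega(\min\{1,\lambda/\sigma^2\})$, so $1/\eta = O(\sigma^2/\lambda)$. By Theorem~\ref{thm:richardson} (with $c_B = 1$), after $t = O\big(\eta^{-1}\log(1/\eps')\big) = O\big((\sigma^2/\lambda)\log(1/\eps')\big)$ steps the iterate approximates the true minimizer to relative error $\eps'$ and is a polynomial of degree $O(t)$ in the entries of $A_{\mathrm{reg}}$ and $B^{-1}$; since $B^{-1} = \lambda^{-1}I_k$ is constant and $A_{\mathrm{reg}}$ is quadratic in $P^{(i)}$, this is a polynomial of degree $O(t)$ in the variables.

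With $\tilde U, \tilde V$ now polynomials of degree $O(t)$ in the $l$ variables, the objective $\|W \circ (\tilde U \tilde V - A)\|_F^2 + \lambda\|\tilde U\|_F^2 + \lambda\|\tilde V\|_F^2$ is a single polynomial of degree $O(t)$ (up to constants from the product and the square). Feeding it to the solver of Theorem~\ref{thm:polysolver} with $m = O(1)$ constraints (plus binary search over the threshold) and $v = l$ variables costs $(md)^{O(v)}\poly(H) = t^{O(l)}\log^{O(1)}(\Delta/\delta)$, where the $\poly(n)$ prefactor is the cost of assembling the coefficients of the length-$n$ (resp.\ length-$d$) sums. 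It remains to fix $\eps'$. I would track how the per-regression relative error propagates through the product $\tilde U\tilde V$, the reweighting by $W$, and the squaring into an additive perturbation of the objective, and choose $\eps'$ small enough that this perturbation is at most $\tau$; a crude bound on all intermediate magnitudes in terms of $\Delta, n, \sigma^2, \lambda$ gives $\log(1/\eps') = O\big(\log(\Delta(\sigma^2+\lambda)n/(\lambda\tau))\big)$, so that $t = O\big((\sigma^2/\lambda)\log(\Delta(\sigma^2+\lambda)n/(\lambda\tau))\big)$ and the base matches the claim. Combined with the $(1+\eps)$ guarantee inherited from Theorems~\ref{thm:relative_error} and \ref{thm:no_rank_dependence}, this yields objective value $\le (1+\eps)\OPT + \tau$ in the stated time.

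I expect the main obstacle to be exactly this final error-propagation bookkeeping: Richardson controls only the $\ell_2$ relative error of each minimizer, whereas the objective couples $\tilde U$ and $\tilde V$ multiplicatively and reweights by $W$, so I must argue the accumulated error stays additive and small while keeping $t$ --- and hence the polynomial degree --- only logarithmic in $1/\tau$. A secondary subtlety is that the spectral bound $\sigma_{\max}(P^{(i)})^2 = O(\sigma^2)$ is needed only to certify that the near-optimal value of $P^{(i)}$ produces a converged, accurate iterate (the degree bound itself is symbolic and uniform over all $P, Q$), and it must hold simultaneously for all $i$, and the analogous bound for all $Q^{(j)}$, under the same draw of $S', S''$. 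I would secure this on the same high-probability event on the sketches already used in Theorem~\ref{thm:sketch_left}, guaranteeing that the preconditioner quality $1/\eta = O(\sigma^2/\lambda)$ is uniform and, crucially, independent of $n$.
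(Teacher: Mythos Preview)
Your approach is essentially the paper's: replace Cramer's rule by Richardson's iteration with the trivial preconditioner $B=\lambda I_k$, so that $\tilde U_{i,;}$ and $\tilde V_{;,j}$ become honest polynomials of degree $O(t)$ in the $l$ variables, and the polynomial solver of Theorem~\ref{thm:polysolver} now has degree $O(t)$ instead of $O(nk)$.

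There is one genuine gap. You need $\sigma_{\max}(P^{(i)})^2 = O(\sigma^2)$ to hold \emph{simultaneously for all $i\le n$} under a single draw of $S'$, and you propose to ``secure this on the same high-probability event on the sketches already used in Theorem~\ref{thm:sketch_left}''. But Theorem~\ref{thm:sketch_left} establishes no such event: its whole point is to avoid a union bound over $i$ by bounding the \emph{expectation} of the total error and applying Markov's inequality; the individual upper-distortion bounds $K_{S'}(V^* D_{W_{i,;}}) = O(1)$ hold only with constant probability per $i$. With $\ell = \Theta((s+\log(1/\eps))/\eps)$ rows and no $\log n$ term, you cannot drive the per-$i$ failure probability down to $o(1/n)$ while keeping the upper-distortion constant $O(1)$.

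The paper's fix is to relax the spectral bound to $\sigma_{\max}(P^{(i)})^2 \le \log(n)\,(\sigma^2+\lambda)$: applying Lemma~\ref{lem:AMM} with $\gamma \approx \log n$ gives failure probability $n^{-\Omega(1)}$ per $i$, so a union bound over all $i$ (and symmetrically all $j$) succeeds with the existing sketch size. The price is that $1/\eta$ picks up a $\log n$ factor, but this is harmless: it only multiplies the iteration count $t$, which already sits in the base of the exponent and already contains a $\log n$ from the $\log(\cdot\, n/\tau)$ needed to make the summed additive error at most $\tau$. So your assertion that $1/\eta$ can be made ``independent of $n$'' is not correct as stated, but the dependence is benign and is absorbed into the stated running time.
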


\begin{proof}
Fix some $i$. By Theorem~\ref{thm:richardson}, we may express a $k$-order approximation of $\tilde{U}_{i,;}$ as $$\tilde{U}_{i,;}^k =A_{i,;} D_{W_{i,;}} S' (P^{(i)})^T p_k(P^{(i)} (P^{(i)})^T + \lambda I_k)$$ where $p_k$ is a degree $O(k)$ polynomial that approximate the inverse. Furthermore, we claim $\lambda I_k \preceq P^{(i)} (P^{(i)})^T + \lambda I_k \preceq \log(n) (1+\sigma/\lambda )I_k$, where $P^{(i)} = V^*D_{w_i,;}S$, with high probability.

By the same arguments as in the proof of Theorem~\ref{thm:sketch_left} in Claim~\ref{claim:alphabounds}, let $M^T = V^*D_{w_i,;}$ and $\hat{M}$ defined analogously, along with $\hat{U}, \hat{S}$. Now define $\Gamma := \lVert \hat{U}^T \hat{S}^T \hat{S} \hat{U} - I_{k} \rVert$. Again, let $U_1$ form its first $n$ rows and $U_2$ form the rest. Note that  $\|U_1\|_F^2 \leq {\texttt{sd}}_\lambda (M)$ and $\|U_1 \| < 1$. By Lemma~\ref{lem:AMM} with $A = B = U_1$ and $\gamma = \log(n) / \| U_1 \|^2$, then we have $$\Pr [ 1+\Gamma > \log(n) ] < n^{-\Omega(1)} $$ since $\ell >  \Omega(\frac{1}{\gamma^2} (\|U_1\|_F^2/\|U_1\|^2 + \log(n))) = \Omega({\texttt{sd}}_\lambda(M)) = \Omega(s)$.

This implies that with high probability $$\| SM x  \|^2 + \lambda \| x \|^2 \leq \log(n) (\| M x \|^2 + \lambda \| x \|^2 ).$$ Specifically, we have $\sigma_1(P^{(i)})^2 \leq \log(n)\sigma_1(V^*D_{w_i,;})^2 + \lambda \log(n) \leq \log(n)\sigma^2 +\lambda \log(n).$

Therefore, by the guarantees of Theorem~\ref{thm:richardson}, we have \begin{align*}
    \| \tilde{U}_{i,;}^k P^{(i)} - A_{i,;} D_{W_{i,;}} S' \|^2 + \lambda \| \tilde{U}_{i,;}^k\|^2 \leq  \| \tilde{U}_{i,;} P^{(i)} - A_{i,;} D_{W_{i,;}} S' \|^2 + \lambda \| \tilde{U}_{i,;}\|^2 + \tau/n
\end{align*} holds if $k > \Omega((\sigma^2/\lambda ) \log(\Delta(\sigma^2+\lambda)/\lambda n/\tau))$.

Summing up for all $i$ and applying a union bound over failure probabilities, we see that with constant probability, we have $$ \sum_{i = 1}^n \| \tilde{U}^k_{i,;} P^{(i)} - A_{i,;} D_{W_{i,;}} S' \|^2 + \lambda \| \tilde{U}^k_{i,;} \|^2 \leq (1+\eps) \OPT + \tau.$$

Finally, since the degree of the polynomial system using $\tilde{U}^k$ is simply $O(k)$, our theorem follows.
\end{proof}

\subsection{Preconditioned GD}

Instead of directly using Richardson's iteration, we may use a preconditioner first instead. The right preconditioner can also be guessed at a cost of increasing the number of variables. Note that multiple preconditioners may be used, but for now, we demonstrate the power of a single preconditioner.

\begin{theorem} \label{thm:preconditioned_grad}
Given matrices $A, W \in \R^{n\times d}$ and $\eps < 0.1$ and suppose the conditions of Theorem~\ref{thm:gradient_descent} hold. Furthermore, $0 < l_W \leq |W| \leq u_W$. Then, there is an algorithm to find $U \in \R^{n \times k}, V \in \R^{k \times d}$ in time $poly(n) \cdot \left( \frac{u_W}{l_W} \cdot \log \left( \frac{\Delta(\sigma^2+\lambda)n}{\lambda \tau} \right) \right)^{l} \cdot \log^{O(1)} \left( \frac{\Delta}{\delta} \right),$ where $l = O((s+\log(\frac{1}{\eps}))^2 \frac{r^2}{\eps^2})$, such that  $ \| W \circ (UV - A) \|_F^2 + \lambda \| U \|_F^2 + \lambda \| V \|_F^2 \leq (1+\eps)\OPT + \tau. $
\end{theorem}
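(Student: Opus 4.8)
\textbf{Proof proposal for Theorem~\ref{thm:preconditioned_grad}.}

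The plan is to mirror the structure of the proof of Theorem~\ref{thm:gradient_descent}, but to replace the unpreconditioned Richardson iteration with a \emph{preconditioned} one in order to trade the spectral-ratio factor $\sigma^2/\lambda$ for the entrywise-ratio factor $u_W/l_W$. Recall that in Theorem~\ref{thm:gradient_descent} the number of Richardson iterations, and hence the degree of the polynomial system, scaled like $\sigma^2/\lambda$ because that was the effective condition number of the matrix $P^{(i)}(P^{(i)})^T + \lambda I_k$ that we were inverting. Here we want to precondition so that the relevant condition number becomes $O(u_W/l_W)$ instead.

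First I would set up the same regression decomposition: for each fixed $i$ we must (approximately) compute $\tilde{U}_{i,;} = A_{i,;} D_{W_{i,;}} S' (P^{(i)})^T (P^{(i)}(P^{(i)})^T + \lambda I_k)^{-1}$, where $P^{(i)} = V^* D_{W_{i,;}} S'$. The key observation is that $D_{W_{i,;}}$ is a diagonal matrix whose entries lie in $[l_W, u_W]$ in absolute value, so the weight-free version of this matrix, $P^{(i)}_0 := V^* S'$ (or more precisely the matrix obtained by replacing $D_{W_{i,;}}$ with a fixed diagonal), is spectrally comparable to $P^{(i)}$ up to a factor $(u_W/l_W)^2$. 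Concretely, I would choose the preconditioner $B$ to be $P^{(i)}_0 (P^{(i)}_0)^T + \lambda I_k$ built from a \emph{common} reference matrix that does not depend on the particular row-weighting $D_{W_{i,;}}$, and verify that $\eta A \preceq B \preceq A$ holds with $\eta = \Omega((l_W/u_W)^2)$, where $A = P^{(i)}(P^{(i)})^T + \lambda I_k$. This follows because $l_W^2 \le (D_{W_{i,;}})_{jj}^2 \le u_W^2$, so the two Gram matrices sandwich each other up to $(u_W/l_W)^2$, and adding $\lambda I_k$ to both preserves the sandwiching. Then Theorem~\ref{thm:richardson} gives convergence in $t = \Omega(\log(c_B/\eps)/\eta) = O((u_W/l_W)^2 \log(\cdots))$ iterations, with $x_t$ expressible as a degree-$O(t)$ polynomial in the entries of $B^{-1}$ and $A$.

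The reason preconditioning helps with the \emph{number of variables / degree} is that a single preconditioner $B$ can be reused across all the $d$ (or $n$) regression subproblems that share the same reference structure, so $B^{-1}$ is ``guessed'' once as additional matrix variables (at the stated cost of increasing the variable count by a constant factor, absorbed into $l$) rather than being recomputed per subproblem via Cramer's rule, which was the source of the $O(nk)$ degree. The remaining steps are then routine adaptations of Theorem~\ref{thm:gradient_descent}: I would reuse the same Lemma~\ref{lem:AMM} argument (with $A=B=U_1$, $\|U_1\|_F^2 \le {\texttt{sd}}_\lambda(M)$, $\|U_1\| < 1$) to show that with high probability $P^{(i)}(P^{(i)})^T + \lambda I_k \preceq \log(n)(\sigma^2+\lambda) I_k$ and $\succeq \lambda I_k$, so that $c_B = O(\log(n)(\sigma^2+\lambda)/\lambda)$ contributes only the logarithmic factor $\log(\Delta(\sigma^2+\lambda)n/(\lambda\tau))$. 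Summing the per-row guarantees $\|\tilde{U}^k_{i,;} P^{(i)} - A_{i,;}D_{W_{i,;}}S'\|^2 + \lambda\|\tilde{U}^k_{i,;}\|^2 \le \|\tilde{U}_{i,;}P^{(i)} - \cdots\|^2 + \lambda\|\tilde{U}_{i,;}\|^2 + \tau/n$ over all $i$ and taking a union bound over the failure probabilities yields the $(1+\eps)\OPT + \tau$ objective, while the degree of the final polynomial system is $O(t) = O((u_W/l_W)\log(\cdots))$ per the preconditioned iteration count, giving the claimed running time after plugging into Theorem~\ref{thm:polysolver} with $l = O((s+\log(1/\eps))^2 r^2/\eps^2)$ variables.

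The main obstacle I anticipate is verifying that a \emph{single, data-independent} preconditioner $B$ simultaneously $(u_W/l_W)^2$-conditions all the per-row matrices $A^{(i)} = P^{(i)}(P^{(i)})^T + \lambda I_k$; the diagonal reweightings $D_{W_{i,;}}$ differ from row to row, so one must argue that sandwiching the \emph{weights} entrywise in $[l_W^2, u_W^2]$ indeed sandwiches the full Gram matrices $V^* D_{W_{i,;}}^2 (V^*)^T$ (after sketching) in the Loewner order, which requires care because the weight enters quadratically inside the Gram product rather than as a uniform scalar. Once the Loewner sandwiching $\eta B \preceq A \preceq B$ with $\eta = \Omega((l_W/u_W)^2)$ is established, the iteration count improves from $\sigma^2/\lambda$ to $u_W/l_W$ and the rest of the argument carries over verbatim from Theorem~\ref{thm:gradient_descent}.
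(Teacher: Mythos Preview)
Your plan is the same as the paper's: build a single reference preconditioner $B$ from a fixed diagonal (the paper uses $D=l_W I$, you use $D=I$; these differ by a scalar), guess $B^{-1}$ as extra variables, run preconditioned Richardson, and finish exactly as in Theorem~\ref{thm:gradient_descent}.

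There is, however, one genuine gap. You assert that $\eta A\preceq B\preceq A$ with $\eta=\Omega((l_W/u_W)^2)$ ``follows because $l_W^2\le (D_{W_{i,;}})_{jj}^2\le u_W^2$, so the two Gram matrices sandwich each other.'' That argument is valid for the \emph{pre}-sketch Gram matrices $V^*D_{W_{i,;}}^2(V^*)^T$, but the matrices you actually need to compare are the \emph{post}-sketch ones
\[
P^{(i)}(P^{(i)})^T \;=\; V^*\,D_{W_{i,;}}\,S'(S')^T\,D_{W_{i,;}}\,(V^*)^T,
\]
where $S'(S')^T$ sits \emph{between} the two copies of $D_{W_{i,;}}$. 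Entrywise diagonal bounds do not transfer through that middle factor: for a vector $v=(V^*)^Tx$ one can have $(S')^Tv=0$ while $(S')^TD_{W_{i,;}}v\neq 0$, so no finite $u_W$ gives $\|(S')^TD_{W_{i,;}}v\|^2\le u_W^2\|(S')^Tv\|^2$. Your parenthetical ``(after sketching)'' flags the right concern, but the formula you wrote, $V^*D_{W_{i,;}}^2(V^*)^T$, is the pre-sketch object, and the reason the step is delicate is not that the weight enters quadratically but that the sketch separates the two diagonal factors.

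The paper closes this gap in two moves. First it records the pre-sketch Loewner sandwich $\tfrac{l_W}{u_W}\,R^{(i)}(R^{(i)})^T\preceq RR^T\preceq R^{(i)}(R^{(i)})^T$ with $R=l_W V^*$ and $R^{(i)}=V^*D_{W_{i,;}}$. Then it invokes the high-probability condition-number tail bounds of Claim~\ref{claim:alphabounds} (applied to both $R$ and $R^{(i)}$) to push the sandwich through the sketch, obtaining
\[
\frac{l_W}{u_W\log n}\,R^{(i)}SS^T(R^{(i)})^T\;\preceq\; RSS^TR^T\;\preceq\;\log n\cdot R^{(i)}SS^T(R^{(i)})^T,
\]
so that $B=(1/\log n)\,RSS^TR^T$ works with $\eta=\Theta\bigl((l_W/u_W)/\log^2 n\bigr)$; the extra $\log n$ factors are absorbed into the logarithm already present in the runtime. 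Once you replace your direct diagonal argument with this two-step transfer, the remainder of your outline (guessing $B^{-1}$, Richardson degree $O(1/\eta\cdot\log(c_B/\tau))$, summing the per-row guarantees, and plugging into Theorem~\ref{thm:polysolver}) coincides with the paper's proof.
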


\begin{proof}
For all $i$, we want to show there exists some matrix $B$ such that $\eta (P^{(i)} (P^{(i)})^T + \lambda I_k) \preceq B \preceq P^{(i)} (P^{(i)})^T + \lambda I_k$ with constant probability. Then, we may simply guess $B^{-1}$ with only an additional $O((s+\log(1/\eps))^2)$ variables and apply Theorem~\ref{thm:richardson}, express a $k$-order approximation of $\tilde{U}_{i,;}$ as
$$\tilde{U}_{i,;}^k =A_{i,;} D_{W_{i,;}} S' (P^{(i)})^T p_k(P^{(i)} (P^{(i)})^T + \lambda I_k, B^{-1})$$
where $p_k$ is a degree $O(k)$ polynomial that approximate the inverse and apply the same analysis as Theorem~\ref{thm:gradient_descent} to see that $k = O(\eta^{-1}\log(c_B/\tau))$ suffices.

In fact, we will explicitly construct $B$. Let $D \in \R^{n\times n}$ be the diagonal matrix with diagonal entries $l_W$. Let $P = V^*DS = RS$, where $R = V^*D_l$. Also, we define $P^{(i)} = V^*D_{w_i,;}S = R^{(i)}S$. Then, we see that by our bounds on $W$, $$\frac{l_W}{u_W}R^{(i)}(R^{(i)})^T \preceq RR^T \preceq R^{(i)}(R^{(i)})^T$$
By using similar arguments for condition number bounds as in Claim~\ref{claim:alphabounds} in Theorem~\ref{thm:sketch_left}, we see that
\begin{align*} \frac{l_W}{u_W\log(n)}R^{(i)}SS^T(R^{(i)})^T \preceq RSS^TR^T \preceq \log(n) R^{(i)}SS^T(R^{(i)})^T \end{align*}
with high probability. So, we set $B = (1/\log(n))RSS^TR$ and that implies that $\eta = 1/\log(n)^2 (u_W/l_W)$. Then, using Theorem~\ref{thm:richardson}, we conclude with the same analysis as in Theorem~\ref{thm:gradient_descent}.
\end{proof}
\section{Experiments} \label{sec:experiments}
The goal of our experiments was to show that sketching down to the statistical dimension can be applied to regularized weighted low rank approximation without sacrificing overall accuracy in the objective function, as our theory predicts. We combine sketching with a common practical alternating minimization heuristic for solving regularized weighted low rank approximation, rather than implementing a polynomial system solver. At each step in the algorithm, we have a candidate $U$ and $V$ and we perform a ``best-response" where we either update $U$ to give the best regularized weighted low rank approximation cost for $V$ or we update $V$ to give the best regularized weighted low rank approximation cost for $U$. We used a synthetic dataset and several real datasets (connectus, NIPS, landmark, and language) \cite{DH11, PJST17}. All our experiments ran on a MacBook Pro 2012 with 8GB RAM and a 2.5GHz Intel Core i5 processor.

\begin{figure}[ht]
    \begin{subfigure}[t]{0.325\textwidth} 
        \includegraphics[width=\textwidth, height=3.75cm]{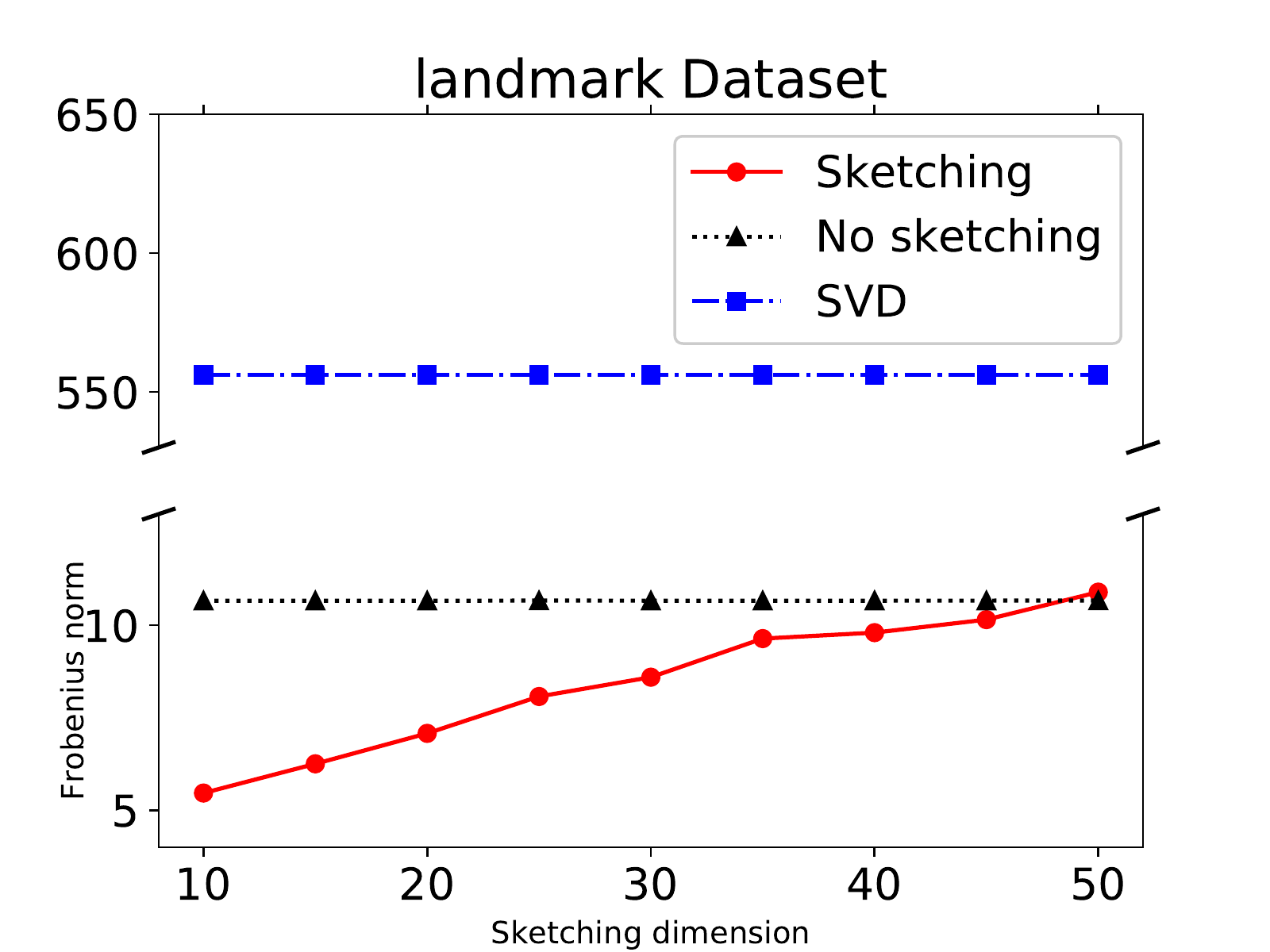}
    \end{subfigure}
    \hfill
    \begin{subfigure}[t]{0.325\textwidth}
        \includegraphics[width=\textwidth, height=3.75cm]{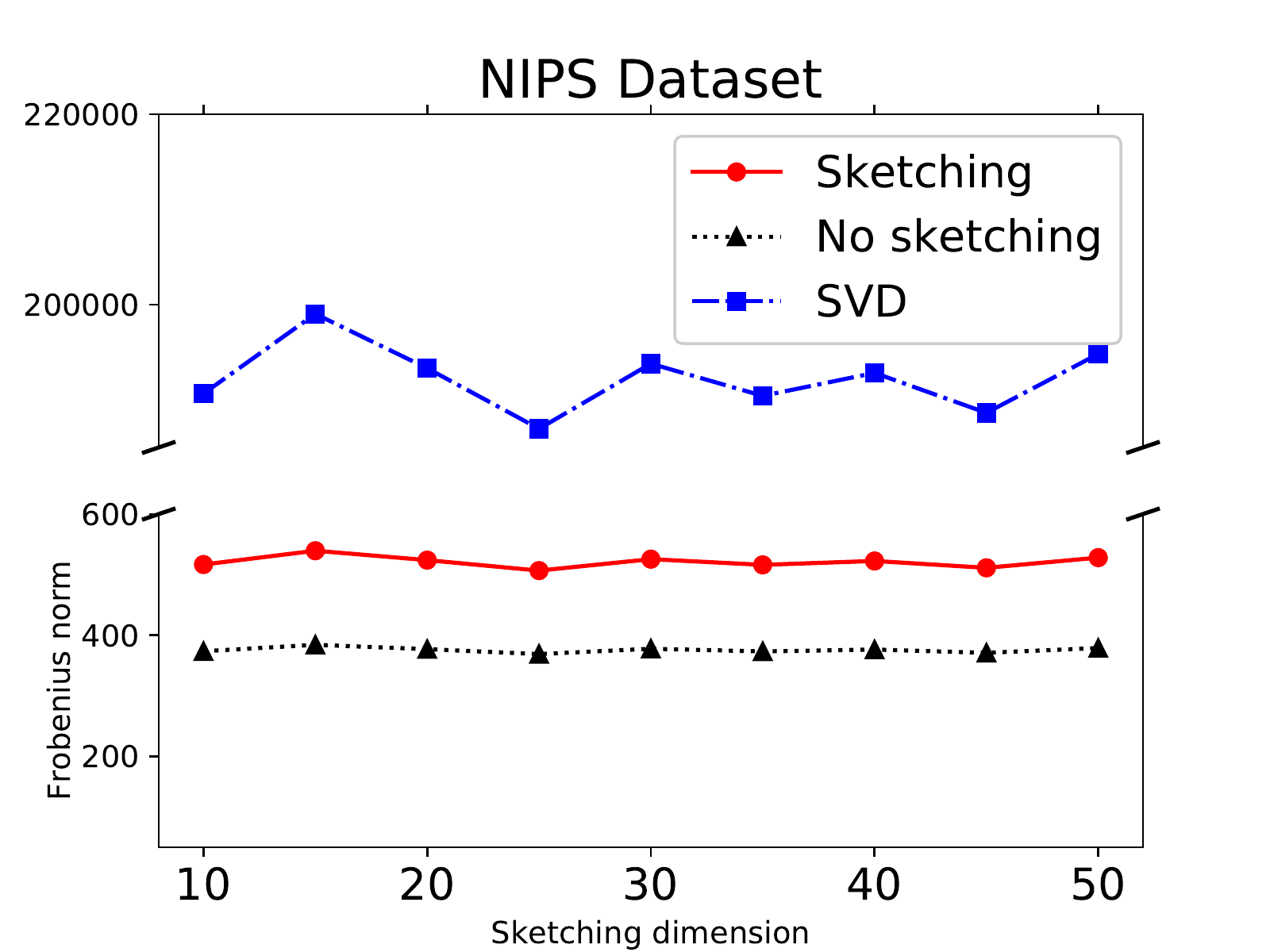} 
    \end{subfigure}
    \hfill
    \begin{subfigure}[t]{0.325\textwidth}
        \includegraphics[width=\textwidth, height=3.75cm]{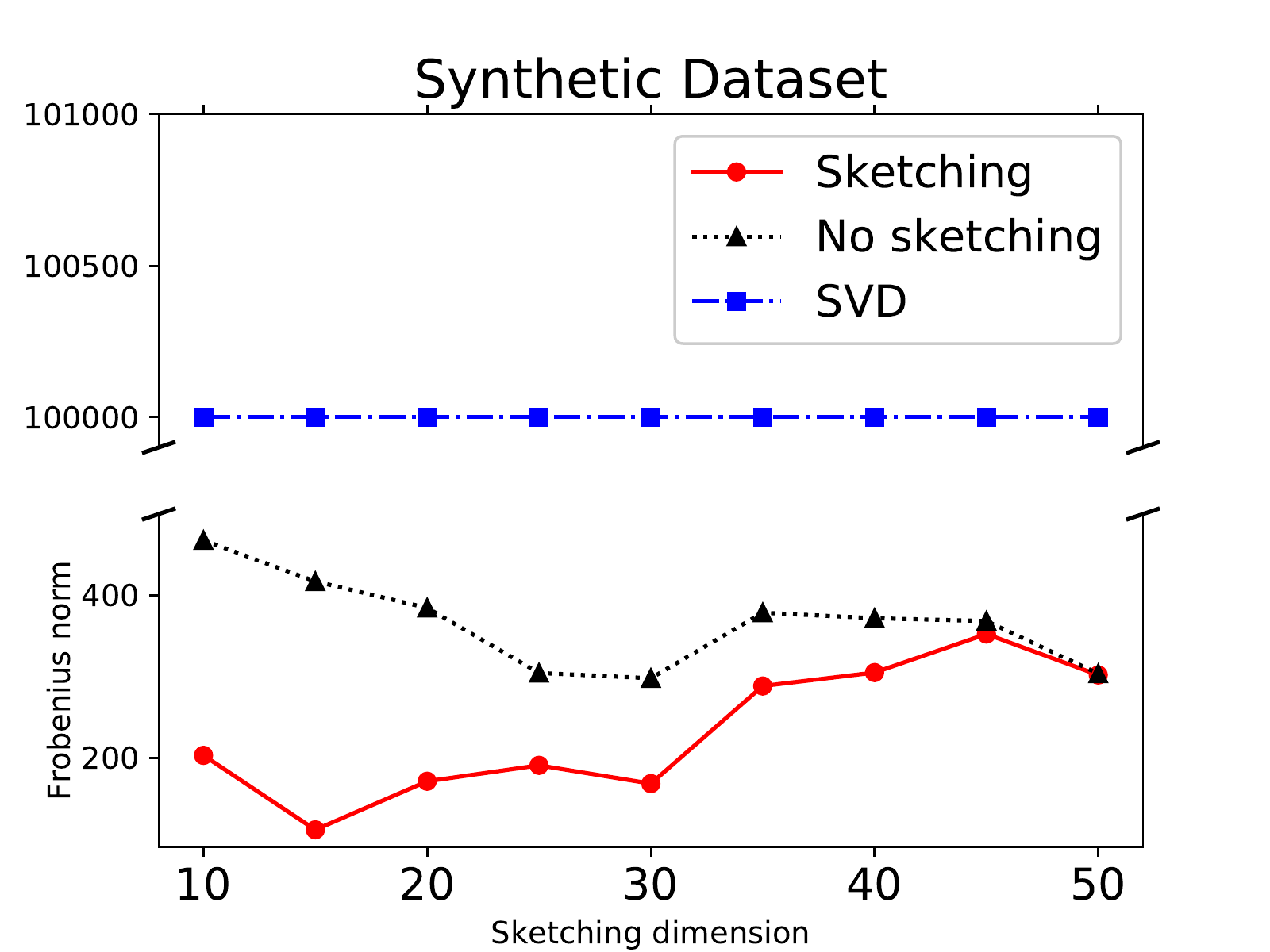}
    \end{subfigure}
    \caption{Regularized weighted low-rank approximations with $\lambda = 0.556$ for landmark, $\lambda = 314$ for NIPS, and $\lambda = 1$ for the synthetic dataset.}
    \label{figure}
\end{figure}

For all datasets, the task was to find a rank $k = 50$ decomposition of a given matrix $A$. For the experiments of Figure~\ref{figure} and Figure~\ref{figure2}, we generated dense weight matrices $W$ with the same shape as $A$ and with each entry being a $1$ with probability $0.8$, a $0.1$ with probability $0.15$, and a $0.01$ with probability $0.05$. For the experiments of Figure~\ref{figure3}, we generated binary weight matrices where each entry was $1$ with probability $0.9$. Note that this setting corresponds to a regularized form of matrix completion. We set the regularization parameter $\lambda$ to a variety of values (described in the Figure captions) to illustrate the performance of the algorithm in different settings.

For the synthetic dataset, we generated matrices $A$ with dimensions $10000 \times 1000$ by picking random orthogonal vectors as its singular vectors and having one singular value equal to $10000$ and making the rest small enough so that the statistical dimension of $A$ would be approximately $2$.

For the real datasets, we chose the connectus, landmark, and language datasets \cite{DH11} and the NIPS dataset \cite{PJST17}. We sampled $1000$ rows from each adjacency or word count matrix to form a matrix $B$ and then let $A$ be the radial basis function kernel of $B$. We performed three algorithms on each dataset: Singular Value Decomposition, Alternating Minimization without Sketching, and Alternating Minimization with Sketching. We parameterized the experiments by $t$, the sketch size, which took values in $\{ 10, 15, 20, 25, 30, 35, 40, 45, 50 \}$. For each value of $t$ we generated a weight matrix and either generated a synthetic dataset or sampled a real dataset as described in the above paragraphs, then tested our three algorithms.

\begin{figure}[ht]
    \begin{subfigure}[t]{0.325\textwidth}
        \includegraphics[width=\textwidth, height=3.75cm]{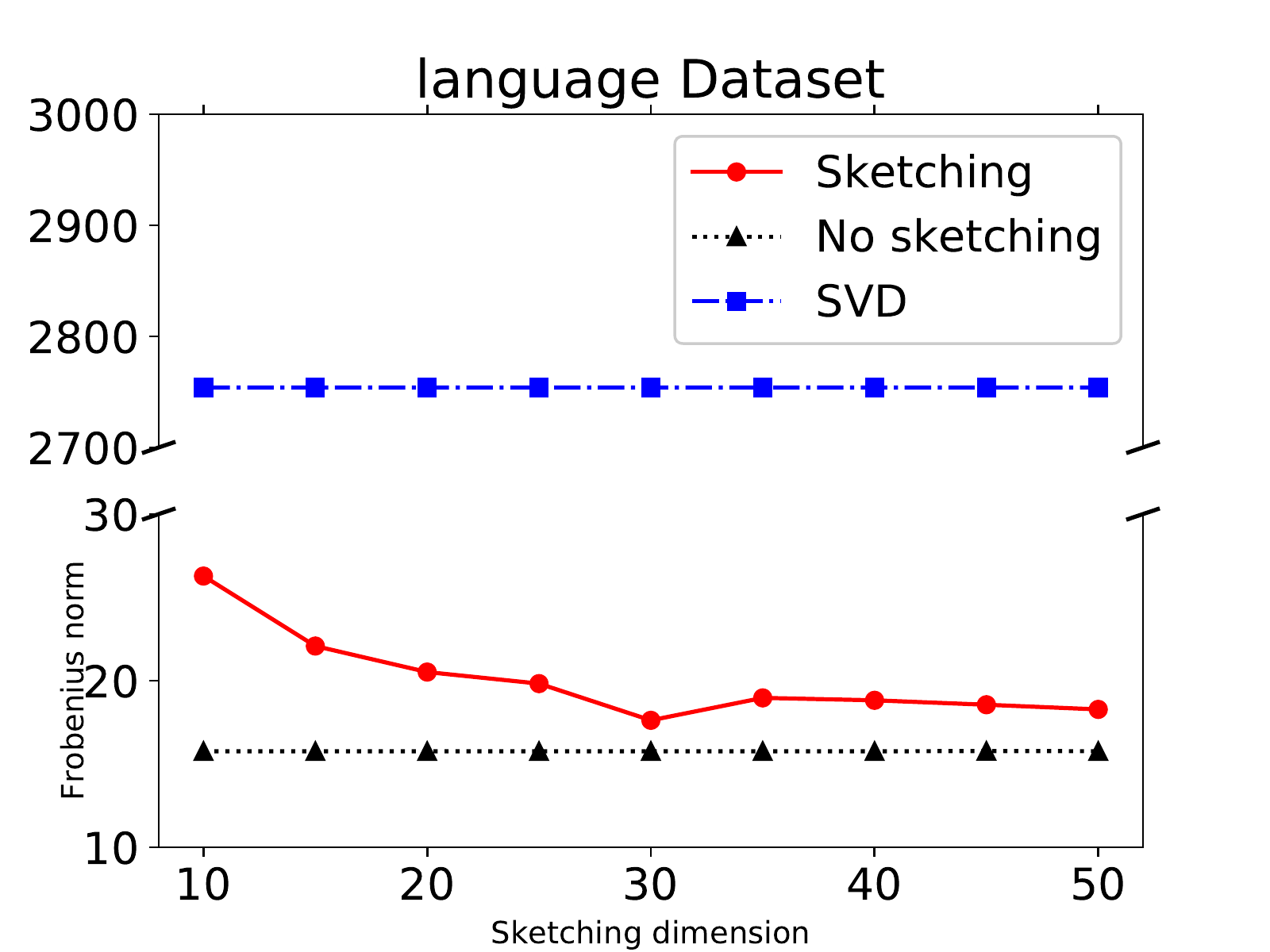} 
    \end{subfigure}
    \hfill
    \begin{subfigure}[t]{0.325\textwidth}
        \includegraphics[width=\textwidth, height=3.75cm]{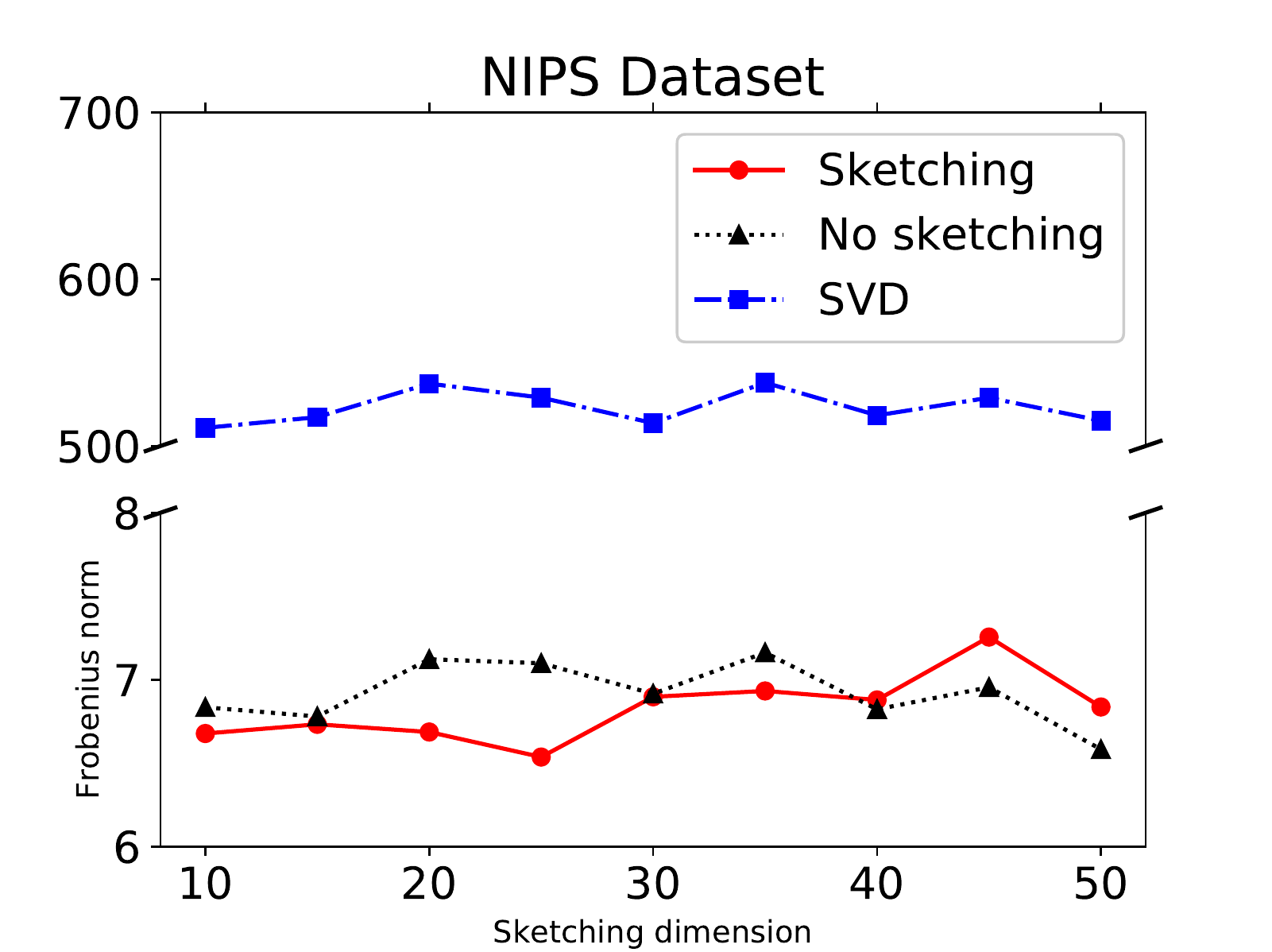} 
    \end{subfigure}
    \hfill
    \begin{subfigure}[t]{0.325\textwidth}
        \includegraphics[width=\textwidth, height=3.75cm]{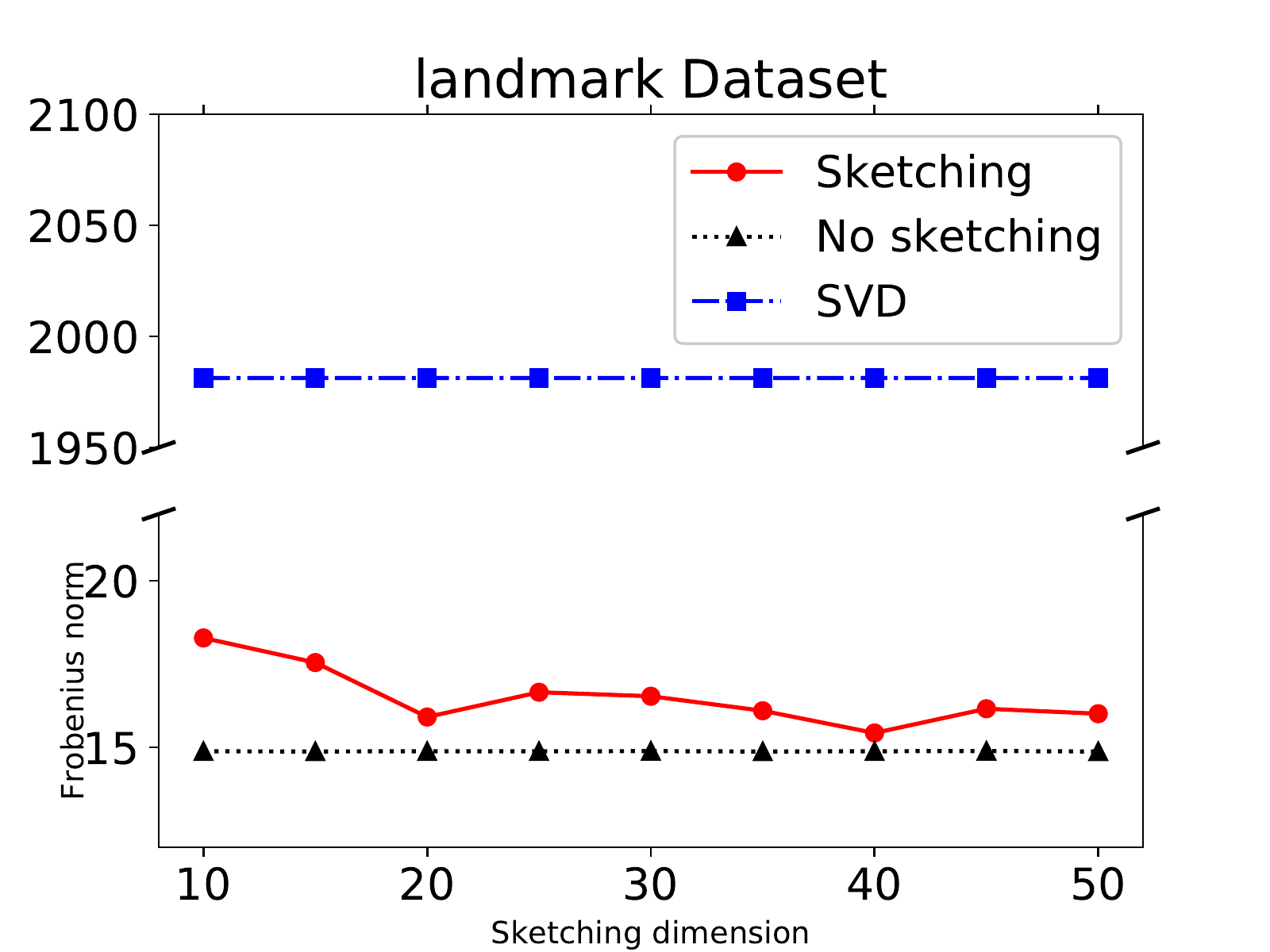} 
    \end{subfigure}
    \caption{Regularized weighted low-rank approximations with $\lambda = 2.754$ for language, $\lambda = 1$ for NIPS, and $\lambda = 1.982$ for landmark.}
    \label{figure2}
\end{figure}

\begin{figure}[ht]
    \begin{subfigure}[t]{0.325\textwidth}
        \includegraphics[width=\textwidth, height=3.75cm]{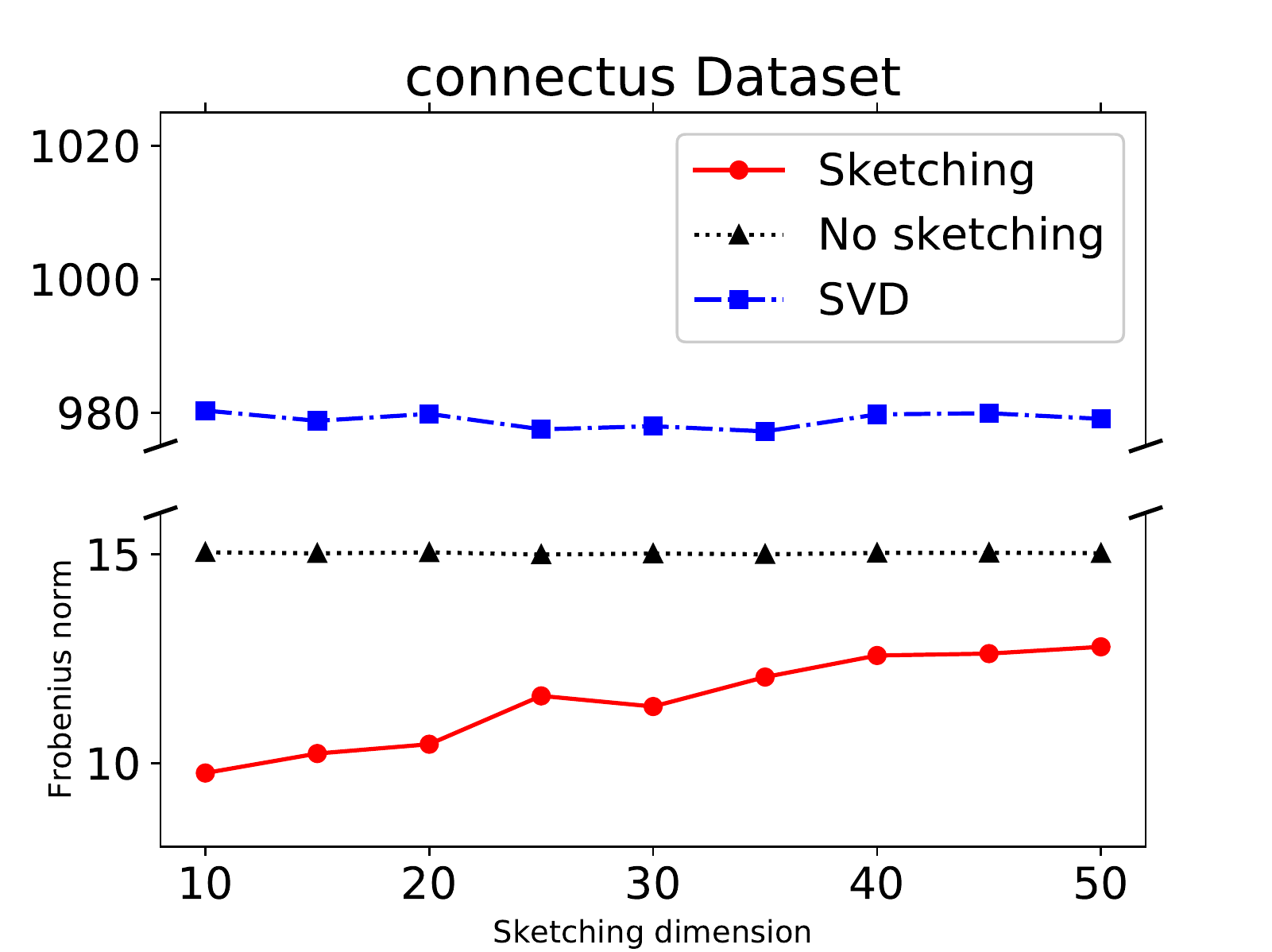}
    \end{subfigure}
    \hfill
    \begin{subfigure}[t]{0.325\textwidth}
        \includegraphics[width=\textwidth, height=3.75cm]{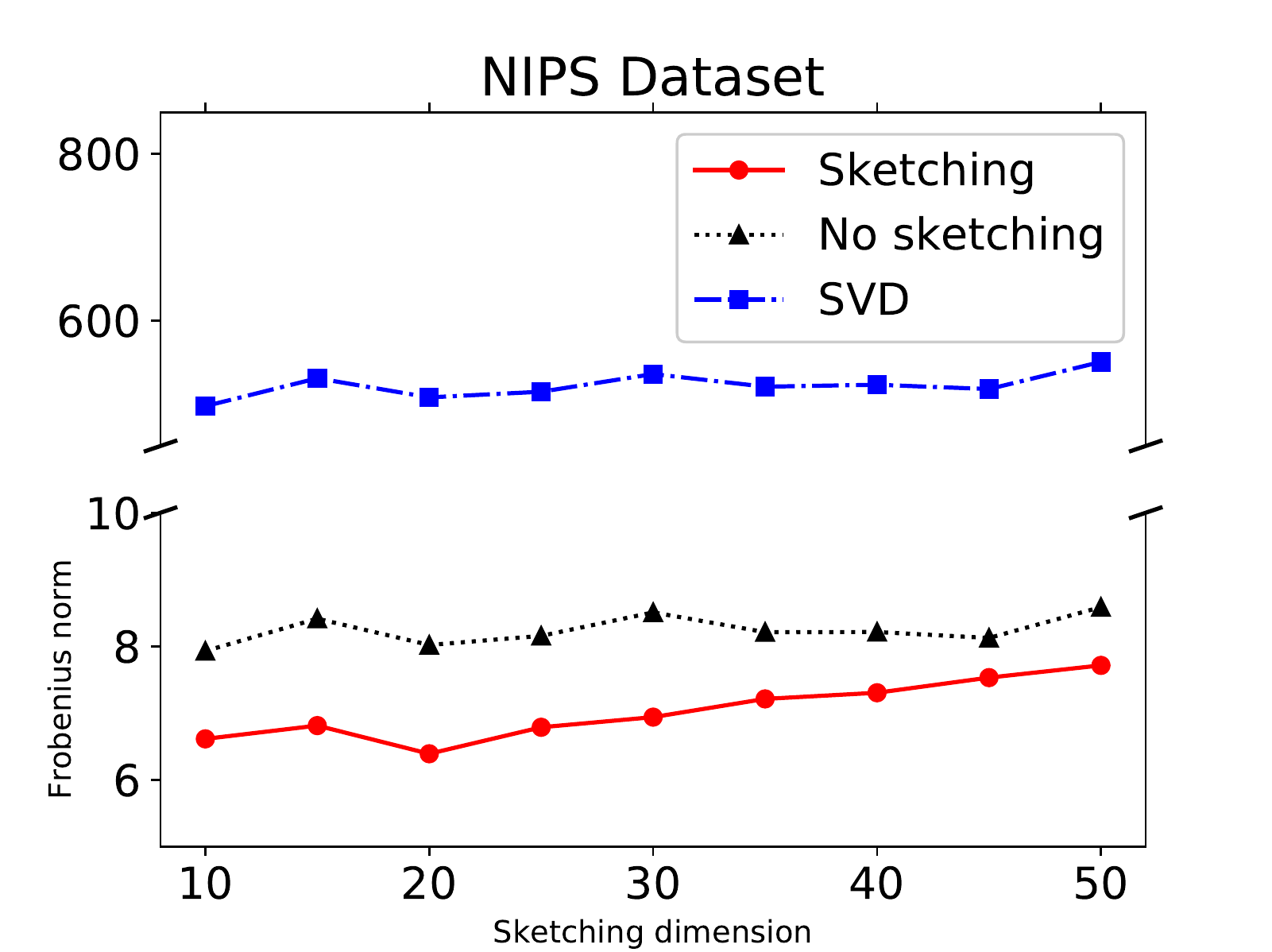}
    \end{subfigure}
    \hfill
    \begin{subfigure}[t]{0.325\textwidth}
        \includegraphics[width=\textwidth, height=3.75cm]{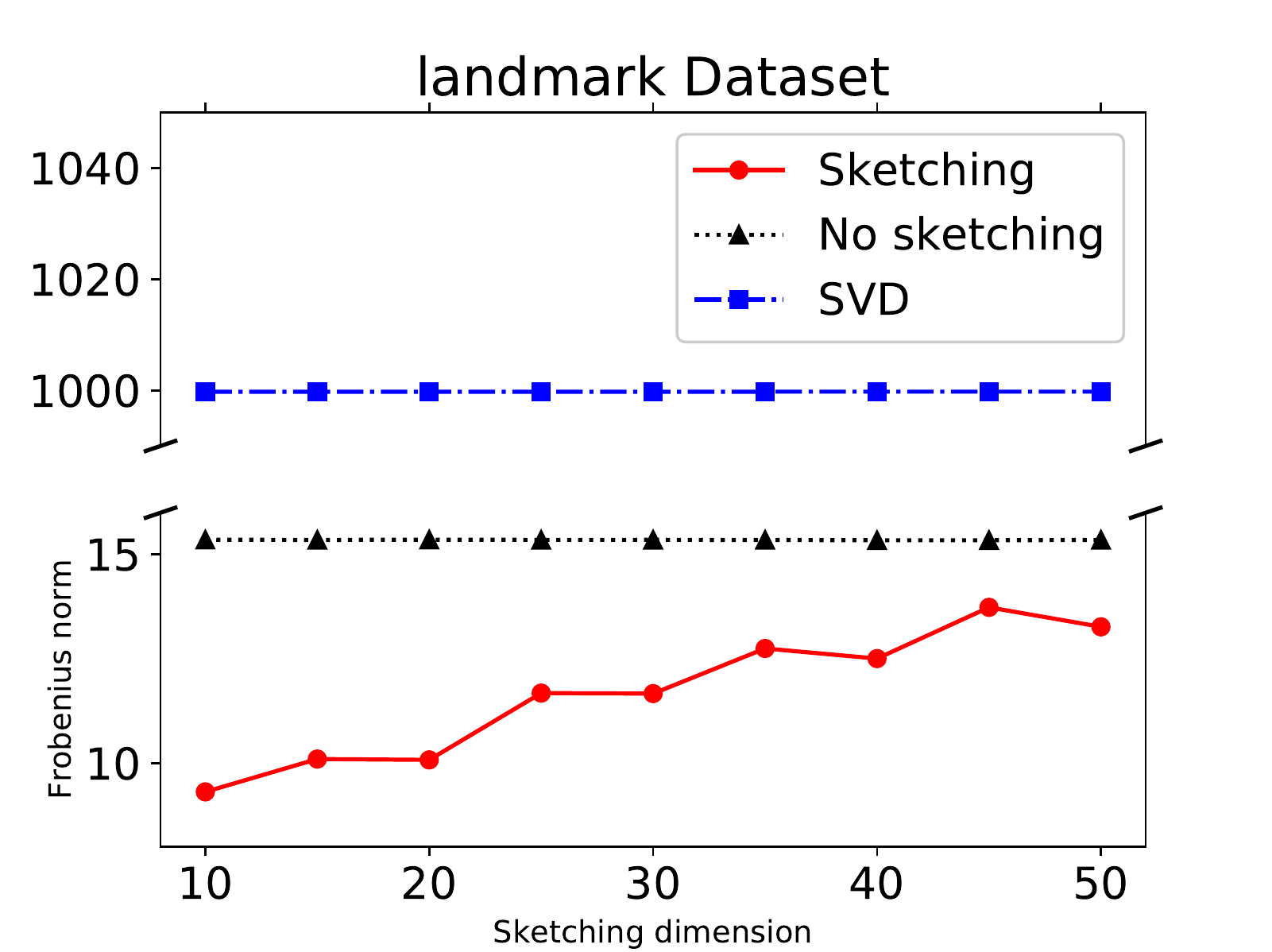}
    \end{subfigure}
    \caption{Regularized weighted low-rank approximations with binary weights and $\lambda = 1$.}
    \label{figure3}
\end{figure}

\begin{table}[h!]
\centering
\begin{tabular}{||c c c||}
 \hline
 \multicolumn{3}{||c||}{Runtimes w/ sketching} \\
 \hline
 t & landmark & NIPS \\ [0.5ex]
 \hline
 \hline
 10 & 54.31 & 49.1 \\
 15 & 53.58 & 50.33 \\
 20 & 57.65 & 51.8 \\
 25 & 65.53 & 56.43 \\
 30 & 68.68 & 57.34 \\
 35 & 72.22 & 62.66 \\
 40 & 79.94 & 63.48 \\
 45 & 81.22 & 67.73 \\
 50 & 72.77 & 73.11 \\ [1ex]
 \hline
\end{tabular}
\quad
\begin{tabular}{||c c c||}
 \hline
 \multicolumn{3}{||c||}{Runtimes wo/ sketching} \\
 \hline
 t & landmark & NIPS \\ [0.5ex]
 \hline
 \hline
 10 & 126.22 & 104.5 \\
 15 & 113.8 & 105.75 \\
 20 & 119.17 & 104.28 \\
 25 & 121.69 & 104.35 \\
 30 & 123.51 & 105.42 \\
 35 & 129.87 & 100.5 \\
 40 & 123.65 & 101.75 \\
 45 & 109.02 & 104.93 \\
 50 & 100.61 & 101.77 \\ [1ex]
 \hline
\end{tabular}
\caption{Runtimes in seconds for alternating minimization with and without sketching.}
\label{table}
\end{table}

For the SVD, we just took the best rank $k$ approximation to $A$ as given by the top $k$ singular vectors. We used the built-in \texttt{svd} function in numpy's linear algebra package.

For Alternating Minimization without Sketching, we initialized the low rank matrix factors $U$ and $V$ to be random subsets of the rows and columns of $A$ respectively, then performed $n = 25$ steps of alternating minimization.

For Alternating Minimization with Sketching, we initialized $U$ and $V$ the same way, but performed $n = 25$ best response updates in the sketched space, as in Theorem~\ref{thm:relative_error}. The sketch $S$ was chosen to be a CountSketch matrix with $t$. Based on Theorem~\ref{thm:no_rank_dependence}, we calculated a rank $t < k$ approximation of $A$ whenever we used a sketch of size $t$. We plotted the objective value of the low rank approximation for the connectus, NIPS, and synthetic datasets for each value of $t$ and each algorithm in Figure~\ref{figure}. The experiment with the landmark dataset in Figure~\ref{figure} used a regularization parameter value of $\lambda = 0.556$, while the experiments with the NIPS and synthetic datasets used a value of $\lambda = 1$. Objective values were given in $1000$'s in the Frobenius norm.

Both forms of alternating minimization greatly outperform the low rank approximation given by the SVD. Alternating minimization with sketching comes within a factor of $1.5$ approximation to alternating minimization without sketching and can sometimes slightly outperform alternating minimization without sketching, showing that performing CountSketch at each best response step does not result in a critically suboptimal objective value. The runtime of alternating minimization with sketching varies from being around $2$ times as fast as alternating minimization without sketching (when the sketch size $t = 10$) to being around $1.4$ times as fast (when the sketch size $t = 50$). Table~\ref{table} shows the runtimes for the non-synthetic experiments of Figure~\ref{figure}.

It may surprise the reader to see that the objective values when using sketching slightly outperform the objective values without using sketching and that the objective value improves as the sketching dimension decreases. In theory, this should not happen because in low rank approximation problems, it never hurts the objective value to increase the number of columns of $U$ and number of rows of $V$ since one can simply add $0$'s.

This phenomenon arises due to the use of the alternating minimization heuristic. Although an ideal low rank approximation algorithm would recognize that if $U$ and $V$ have more columns / rows, then one only needs to add $0$'s, we found in our experiments that alternating minimization tended to add mass to those extra columns / rows. This extra mass resulted in a higher contribution from the regularization terms. Thus, by sketching onto fewer dimensions, the alternating minimization heuristic was improved because it couldn't add mass in the form of extraneous columns for $U$ or rows for $V$.

{\bf Acknowledgements:} Part of this work was done while D. Woodruff was visiting Google Mountain View as well as the Simons Institute for the Theory of Computing, and was supported in part by an Office of Naval Research (ONR) grant N00014-18-1-2562.

\bibliography{ref}

\newcommand{\etalchar}[1]{$^{#1}$}
\begin{thebibliography}{CKP{\etalchar{+}}17}

\bibitem[ACW17]{ACW17}
Haim Avron, Kenneth~L. Clarkson, and David~P. Woodruff.
\newblock Sharper bounds for regularized data fitting.
\newblock In {\em Approximation, Randomization, and Combinatorial Optimization.
  Algorithms and Techniques, {APPROX/RANDOM} 2017, August 16-18, 2017,
  Berkeley, CA, {USA}}, pages 27:1--27:22, 2017.

\bibitem[BPR96]{basu1996combinatorial}
Saugata Basu, Richard Pollack, and Marie-Fran{\c{c}}oise Roy.
\newblock On the combinatorial and algebraic complexity of quantifier
  elimination.
\newblock {\em Journal of the ACM (JACM)}, 43(6):1002--1045, 1996.

\bibitem[CD08]{CD08}
Zizhong Chen and Jack~J. Dongarra.
\newblock Condition numbers of gaussian random matrices.
\newblock {\em CoRR}, abs/0810.0800, 2008.

\bibitem[CKP{\etalchar{+}}17]{cohen2017almost}
Michael~B Cohen, Jonathan Kelner, John Peebles, Richard Peng, Anup~B Rao, Aaron
  Sidford, and Adrian Vladu.
\newblock Almost-linear-time algorithms for markov chains and new spectral
  primitives for directed graphs.
\newblock In {\em Proceedings of the 49th Annual ACM SIGACT Symposium on Theory
  of Computing}, pages 410--419. ACM, 2017.

\bibitem[CNW16]{cohen_et_al:LIPIcs:2016:6278}
Michael~B. Cohen, Jelani Nelson, and David~P. Woodruff.
\newblock {Optimal Approximate Matrix Product in Terms of Stable Rank}.
\newblock In Ioannis Chatzigiannakis, Michael Mitzenmacher, Yuval Rabani, and
  Davide Sangiorgi, editors, {\em 43rd International Colloquium on Automata,
  Languages, and Programming (ICALP 2016)}, volume~55 of {\em Leibniz
  International Proceedings in Informatics (LIPIcs)}, pages 11:1--11:14,
  Dagstuhl, Germany, 2016. Schloss Dagstuhl--Leibniz-Zentrum fuer Informatik.

\bibitem[DH11]{DH11}
Timothy~A. Davis and Yifan Hu.
\newblock The university of florida sparse matrix collection.
\newblock {\em ACM Trans. Math. Softw.}, 38(1):1:1--1:25, December 2011.

\bibitem[DN11]{das}
Saptarshi Das and Arnold Neumaier.
\newblock Regularized low rank approximation of weighted data sets.
\newblock {\em Preprint}, 2011.

\bibitem[GG11]{GG11}
Nicolas Gillis and Francois Glineur.
\newblock Low-rank matrix approximation with weights or missing data is
  np-hard.
\newblock {\em SIAM Journal on Matrix Analysis and Applications},
  32(4):1149--1165, 2011.

\bibitem[LA03]{l03}
Wu{-}Sheng Lu and Andreas Antoniou.
\newblock New method for weighted low-rank approximation of complex-valued
  matrices and its application for the design of 2-d digital filters.
\newblock In {\em {ISCAS} {(3)}}, pages 694--697, 2003.

\bibitem[LLR16]{li2016recovery}
Yuanzhi Li, Yingyu Liang, and Andrej Risteski.
\newblock Recovery guarantee of weighted low-rank approximation via alternating
  minimization.
\newblock In {\em International Conference on Machine Learning}, pages
  2358--2367, 2016.

\bibitem[LPW97]{lu97}
W.-S Lu, S.-C Pei, and P.-H Wang.
\newblock Weighted low-rank approximation of general complex matrices and its
  application in the design of 2-d digital filters.
\newblock In {\em IEEE Transactions on Circuits and Systems}, volume~44, pages
  650--655, 1997.

\bibitem[PJST17]{PJST17}
Valerio Perrone, Paul~A. Jenkins, Dario Span{\`{o}}, and Yee~Whye Teh.
\newblock Poisson random fields for dynamic feature models.
\newblock {\em Journal of Machine Learning Research}, 18:127:1--127:45, 2017.

\bibitem[PW15]{pilanci2015randomized}
Mert Pilanci and Martin~J Wainwright.
\newblock Randomized sketches of convex programs with sharp guarantees.
\newblock {\em IEEE Transactions on Information Theory}, 61(9):5096--5115,
  2015.

\bibitem[Ren92a]{renegar1992computational}
James Renegar.
\newblock On the computational complexity and geometry of the first-order
  theory of the reals. part i: Introduction. preliminaries. the geometry of
  semi-algebraic sets. the decision problem for the existential theory of the
  reals.
\newblock {\em Journal of symbolic computation}, 13(3):255--299, 1992.

\bibitem[Ren92b]{renegar1992computational2}
James Renegar.
\newblock On the computational complexity and geometry of the first-order
  theory of the reals. part ii: The general decision problem. preliminaries for
  quantifier elimination.
\newblock {\em Journal of Symbolic Computation}, 13(3):301--327, 1992.

\bibitem[RSW16]{rsw16}
Ilya~P. Razenshteyn, Zhao Song, and David~P. Woodruff.
\newblock Weighted low rank approximations with provable guarantees.
\newblock In {\em Proceedings of the 48th Annual {ACM} {SIGACT} Symposium on
  Theory of Computing, {STOC} 2016, Cambridge, MA, USA, June 18-21, 2016},
  pages 250--263, 2016.

\bibitem[Shp90]{s90}
D.~Shpak.
\newblock A weighted-least-squares matrix decomposition method with
  applications to the design of two-dimensional digital filters.
\newblock In {\em IEEE Thirty Third Midwest Symposium on Circuits and Systems},
  1990.

\bibitem[SJ03]{sj03}
Nathan Srebro and Tommi~S. Jaakkola.
\newblock Weighted low-rank approximations.
\newblock In {\em Machine Learning, Proceedings of the Twentieth International
  Conference {(ICML} 2003), August 21-24, 2003, Washington, DC, {USA}}, pages
  720--727, 2003.

\bibitem[TYUC17]{tropp2017practical}
Joel~A Tropp, Alp Yurtsever, Madeleine Udell, and Volkan Cevher.
\newblock Practical sketching algorithms for low-rank matrix approximation.
\newblock {\em SIAM Journal on Matrix Analysis and Applications},
  38(4):1454--1485, 2017.

\bibitem[Woo14]{w14}
David~P. Woodruff.
\newblock Sketching as a tool for numerical linear algebra.
\newblock {\em Foundations and Trends in Theoretical Computer Science},
  10(1-2):1--157, 2014.

\end{thebibliography}
\bibliographystyle{alpha}

\end{document}